\theoremstyle{plain}
\newtheorem{remark}{Remark}[section]
\newtheorem{example}{Example}[section]
\newtheorem{lemma}{Lemma}[section]
\newtheorem*{lemma*}{Lemma}
\newtheorem{theorem}{Theorem}[section]
\newtheorem{corollary}{Corollary}[section]
\newtheorem*{claim*}{Claim}
\newtheorem{definition}{Definition}[section]
\DeclareMathOperator*{\argmax}{\arg\!\max}
\newenvironment{case}{%
 \let\olditem\item%
 \renewcommand\item[1][]{\olditem \textbf{##1} \\}%
 \begin{enumerate}[label=\textbf{Case \arabic*:},itemindent=*,leftmargin=0em]}{\end{enumerate}%
 }
\newcounter{experiment}[section]
\newcommand{\A}{{\mathcal A}}
\newcommand{\Goods}{{\mathcal G}}
\newcommand{\W}{{\mathcal W}}
\newcommand{\Set}{{\mathcal S}}
\newcommand{\M}{{\mathcal M}}
\newcommand{\Vals}{\mathcal V}
\newcommand{\NSWins}{(\A,\Goods,\Vals)}
\newcommand{\vecx}{{\mathbf x}}
\newcommand{\classP}{{\sf P}}
\newcommand{\classNP}{{\sf NP}}
\newcommand{\classAPX}{{\sf APX}}
\newcommand{\Maxcolor}{{\sf MAX\text{-}3\text{-}COLORING}}
\newcommand{\Alloc}{{\sf ALLOCATION}}
\newcommand{\EF}{\sf{EF1}}
\newcommand{\PO}{\sf{PO}}
\newcommand{\NSW}{\sf{NSW}}
\newcommand{\OPT}{\sf OPT}
\newcommand{\AlgSV}{\sf{RepReMatch}}
\newcommand{\AlgAdd}{\sf{SMatch}}
\newcommand{\AlgSVshort}{\sf{[R]}}
\newcommand{\AlgAddshort}{\sf{[S]}}
\newcommand{\SPLC}{\sf{SPLC}}
\newcommand{\BA}{\sf{BA}}
\newcommand{\GS}{\sf{GS}}
\newcommand{\OXS}{\sf{OXS}}
\newcommand{\XOS}{\sf{XOS}}
\let\oldnl\nl
\newcommand{\nonl}{\renewcommand{\nl}{\let\nl\oldnl}}
\renewcommand{\nonl}{\renewcommand{\nl}{\let\nl\oldnl}}
\long\def\symbolfootnote[#1]#2{\begingroup%
\def\thefootnote{\fnsymbol{footnote}}\footnote[#1]{#2}\endgroup}
\title{Approximating Nash Social Welfare under Submodular Valuations through (Un)Matchings\footnote{A preliminary version appeared in the Proceedings of the 31st Annual ACM-SIAM Symposium on Discrete Algorithms (SODA 2020)}}
\author{
Jugal Garg\thanks{University of Illinois at Urbana-Champaign. Supported by NSF CRII Award 1755619.}\\\texttt{jugal@illinois.edu} \and
Pooja Kulkarni\thanks{University of Illinois at Urbana-Champaign. Supported by NSF CRII Award 1755619.}\\ \texttt{poojark2@illinois.edu} \and
Rucha Kulkarni\thanks{University of Illinois at Urbana-Champaign. Supported by NSF CAREER Award CCF 1750436.}\\ \texttt{ruchark2@illinois.edu}
}
\begin{document}
\date{}
\maketitle
\thispagestyle{empty}
\begin{abstract}
We study the problem of approximating maximum Nash social welfare ($\NSW$) when allocating $m$ indivisible items among $n$ asymmetric agents with submodular valuations. The $\NSW$ is a well-established notion of fairness and efficiency, defined as the weighted geometric mean of agents' valuations. For special cases of the problem with symmetric agents and additive(-like) valuation functions, approximation algorithms have been designed using approaches customized for these specific settings, and they fail to extend to more general settings. Hence, no approximation algorithm with factor independent of $m$ is known either for asymmetric agents with additive valuations or for symmetric agents beyond additive(-like) valuations.

In this paper, we extend our understanding of the $\NSW$ problem to far more general settings. Our main contribution is two approximation algorithms for asymmetric agents with additive and submodular valuations respectively. Both algorithms are simple to understand and involve non-trivial modifications of a greedy repeated matchings approach. Allocations of high valued items are done separately by un-matching certain items and re-matching them, by processes that are different in both algorithms. We show that these approaches achieve approximation factors of $O(n)$ and $O(n \log n)$ for additive and submodular case respectively, which is independent of the number of items. For additive valuations, our algorithm outputs an allocation that also achieves the fairness property of envy-free up to one item ($\EF$).

Furthermore, we show that the $\NSW$ problem under submodular valuations is strictly harder than all currently known settings with an $\frac{\mathrm{e}}{\mathrm{e}-1}$ factor of the hardness of approximation, even for constantly many agents. For this case, we provide a different approximation algorithm that achieves a factor of $\frac{\mathrm{e}}{\mathrm{e}-1}$, hence resolving it completely.
\end{abstract}

\section{Introduction}\label{sec:intro}
We study the problem of approximating the maximum Nash social welfare ($\NSW$) when allocating a set $\Goods$ of $m$ indivisible items among a set $\A$ of $n$ agents with non-negative monotone \textit{submodular} valuations $v_i:2^{\Goods}\rightarrow \mathbb{R_+}$, and unequal or \textit{asymmetric} entitlements called \textit{agent weights}. Let $\Pi_n(\Goods)$ denote the set of all allocations, i.e., $\{(\vecx_1, \dots, \vecx_n)\ |\ \cup_i \vecx_i = \Goods;\  \vecx_i\cap \vecx_j = \emptyset, \forall i\neq j\}$. The $\NSW$ problem is to find an allocation maximizing the following weighted geometric mean of valuations,
\begin{equation}\label{eqn:1}
\argmax_{(\vecx_1, \ldots, \vecx_n)\in\Pi_n(\Goods)} \left(\prod\limits_{i\in\A} v_i(\vecx_i)^{\eta_i}\right)^{1/\sum_{i\in\A} \eta_i} \enspace ,
\end{equation}
where $\eta_i$ is the weight of agent $i$. We call this the \textit{Asymmetric Submodular $\NSW$ problem}.\footnote{In the rest of this paper, we refer to various special cases of the problem as the $\alpha$ $\mu$ $\NSW$ problem, where $\alpha$ is the nature of agents, symmetric or asymmetric, and $\mu$ is the type of agent valuation functions. We skip one or both qualifiers when they are clear from the context.} When agents are symmetric, $\eta_i=1, \forall i\in\A$. 

Fair and efficient allocation of resources is a central problem in economic theory. The $\NSW$ objective provides an interesting trade-off between the two extremal objectives of social welfare (i.e., sum of valuations) and max-min fairness, and in contrast to both it is invariant to individual scaling of each agent's valuations (see~\cite{Moulin03} for additional characteristics). It was independently discovered by three different communities as a solution of the bargaining problem in classic game theory~\cite{Nash50}, a well-studied notion of proportional fairness in networking~\cite{Kelly97}, and coincides with the celebrated notion of competitive equilibrium with equal incomes (CEEI) in economics~\cite{Varian74}. While Nash~\cite{Nash50} only considered the symmetric case,~\cite{HarsanyiS72,Kalai77} proposed the asymmetric case, which has also been extensively studied, and used in many different applications, e.g., bargaining theory~\cite{LV07,CM10,Thomson86}, water allocation~\cite{HoubaLZ14,DHYMQ18}, climate agreements~\cite{YIWZ17}, and many more.

The $\NSW$ problem is known to be notoriously hard, e.g., $\classNP$-hard even for two agents with identical additive valuations, and \classAPX-hard in general~\cite{Lee17}.\footnote{Observe that \emph{the partition problem} reduces to the $\NSW$ problem with two identical agents.} Efforts were then diverted to develop efficient approximation algorithms. A series of remarkable works~\cite{ColeG18,ColeDGJMVY17,AnariGSS17,AnariMGV18,BarmanKV18,GargHM18,CheungCGGHM18} provide good approximation guarantees for the special subclasses of this problem where agents are symmetric and have additive(-like) valuation functions\footnote{Slight generalizations of additive valuations are studied: budget additive~\cite{GargHM18}, separable piecewise linear concave (SPLC)~\cite{AnariMGV18}, and their combination~\cite{CheungCGGHM18}.} via utilizing ingenious different approaches. 
All these approaches exploit the symmetry of agents and the characteristics of additive-like valuation functions,\footnote{For instance, the notion of a maximum bang-per-buck (MBB) item is critically used in most of these approaches. There is no such equivalent notion for the submodular case.} which makes them hard to extend to the asymmetric case and more general valuation functions. As a consequence, no approximation algorithm with a factor independent of the number of items $m$~\cite{NguyenR14} is known either for asymmetric agents with additive valuations or for symmetric agents beyond additive(-like) valuations. These questions are also raised in~\cite{ColeDGJMVY17,BarmanKV18}.

The $\NSW$ objective also serves as a major focal point in fair division. For the case of symmetric agents with additive valuations, Caragiannis et al.~\cite{CaragiannisKMPSW16} present a compelling argument in favor of the `unreasonable' fairness of maximum $\NSW$ by showing that such an allocation has outstanding properties, namely, it is $\EF$ (a popular fairness property of envy-freeness up to one item) as well as Pareto optimal ($\PO$), a standard notion of economic efficiency. Even though computing a maximum $\NSW$ allocation is hard, its approximation recovers most of the fairness and efficiency guarantees; see e.g.,~\cite{BarmanKV18,CheungCGGHM18,GargM19}. 

In this paper, we extend our understanding of the $\NSW$ problem to far more general settings. Our main contribution is two approximation algorithms, $\AlgAdd$ and $\AlgSV$ for asymmetric agents with additive and submodular valuations respectively. Both algorithms are simple to understand and involve non-trivial modifications of a greedy repeated matchings approach. Allocations of high valued items are done separately by un-matching certain items and re-matching them, by processes that are different in both algorithms. We show that these approaches achieve approximation factors of $O(n)$ and $O(n \log n)$ for additive and submodular case respectively, which is independent of the number of items. For additive valuations, our algorithm outputs an allocation that is also $\EF$. 

\subsection{Model}\label{sec:preliminaries}
We formally define the valuation functions we consider in this paper, and their relations to other popular functions. For convenience, we also use $v_i(j)$ instead of $v_i(\{j\})$ to denote the valuation of agent $i$ for item $j$. 
\begin{enumerate}
    \item Additive: Given valuation $v_i(j)$ of each agent $i$ for every item $j$, the valuation for a set of items is the sum of the individual valuations. That is, $\forall \Set\subseteq \Goods, v_i(\Set)=\sum_{j\in \Set}v_i(j).$ In the restricted additive case, $v_i(j) = \{0, v_j\}, \forall i$. 
\item Budget additive ($\BA$): Every agent has an upper cap on the maximum valuation she can receive from any allocation. For any set of items, the agent's total valuation is the minimum value from the additive value of this set and the cap. i.e., $\forall \Set\subseteq \Goods, v_i(\Set)=\min\{\sum_{j\in \Set}v_i(j),c_i\},$ where $c_i$ denotes agent $i$'s cap. 
\item Separable piecewise linear concave ($\SPLC$): In this case, there are multiple copies of each item. The valuation of an agent is piecewise linear concave for each item, and it is additively separable across items. Let $v_i(j,k)$ denote the agent $i$'s value for receiving $k^{th}$ copy of item $j$. Concavity implies that $v_i(j,1) \ge v_i(j,2) \dots, \forall i, j$. The valuation of agent $i$ for a set $\Set$ of items, containing $l_j$ copies of items $j$, is $v_i(\Set) = \sum_{j}\sum_{k=1}^{l_j} v_i(j,k)$. 
    \item Monotone Submodular: Let $v_i(\Set_1\mid \Set_2)$ denote the marginal utility of agent $i$ for a set $\Set_1$ of items over set $\Set_2$, where $\Set_1,\Set_2 \subseteq \Goods$ and $\Set_1 \cap \Set_2 = \emptyset.$ Then, the valuation function of every agent is a monotonically non decreasing function $v_i:2^\Goods\rightarrow \mathbb{R_+}$ that satisfies the submodularity constraint that for all $i\in \A, h\in \Goods, \Set_1,\Set_2\subseteq \Goods,$
    $$v_i(h\mid \Set_1\cup \Set_2)\le v_i(h\mid \Set_1).$$
\end{enumerate}
Other popular valuation functions are $\OXS$, gross substitutes ($\GS$), $\XOS$ and subadditive~\cite{Nisan07}. These function classes are related as follows:
\begin{equation}\notag
\sf Additive \subsetneq 
\begin{array}{c}
\SPLC \subsetneq \OXS \\
\BA
\end{array}
\subsetneq \GS \subsetneq Submodular \subsetneq \XOS \subsetneq subadditive\enspace . 
\end{equation}

\subsection{Results}
Table~\ref{table:results} summarizes approximation guarantees of the algorithms $\AlgSV$ and $\AlgAdd$ under popular valuation functions, formally defined in Section~\ref{sec:preliminaries}. Here, the approximation guarantee of an algorithm is defined as $\alpha$ for an $\alpha \ge 1$, if it outputs an allocation whose (weighted) geometric mean is at least $1/\alpha$ times the maximum (optimal) geometric mean. All current best known results are also stated in the table for reference. 

\begin{table*}[!ht]
\begin{center}
\begin{tabular}{| c | c | c| c | c | }
  \hline			
  \multirow{2}{*}{\textbf{Valuations}} & \multicolumn{2}{c|}{\textbf{Symmetric Agents}} & \multicolumn{2}{c|}{\textbf{Asymmetric Agents}} \\ \cline{2-5}
   & \textbf{Hardness} & \textbf{Algorithm}  & \textbf{Hardness}& \textbf{Algorithm}\\ \hline
  Restricted Additive  & $1.069$ \cite{GargHM18} & $1.45$\cite{BarmanKV18} $\AlgAddshort$  &$1.069$ \cite{GargHM18}  &  $O(n)$ $\AlgAddshort$\\\hline
  Additive  & \textemdash \textquotedbl \textemdash &  $1.45$ \cite{BarmanKV18} & \textemdash \textquotedbl \textemdash & \textemdash \textquotedbl \textemdash \\\hline
  Budget additive  & \multirow{2}{*}{\textemdash \textquotedbl \textemdash} &  \multirow{2}{*}{$1.45$~\cite{CheungCGGHM18}} & \multirow{2}{*}{\textemdash \textquotedbl \textemdash} &  \multirow{2}{*}{\textemdash \textquotedbl \textemdash} \\
  $\SPLC$ & & & & \\\hline
  $\OXS$ & \multirow{2}{*}{\textemdash \textquotedbl \textemdash} & \multirow{2}{*}{$O(n\log n)$ $\AlgSVshort$} & \multirow{2}{*}{\textemdash \textquotedbl \textemdash}& \multirow{2}{*}{$O(n\log n)$ $\AlgSVshort$} \\
  Gross substitutes   &  &  &  & \\\hline
  Submodular &  $1.5819$ [Thm \ref{thm:submod_hardness}] & \textemdash \textquotedbl \textemdash & $1.5819$ [Thm \ref{thm:submod_hardness}] & \textemdash \textquotedbl \textemdash \\\hline
  $\XOS$ & \multirow{2}{*}{\textemdash \textquotedbl \textemdash} & \multirow{2}{*}{$O(m)$ \cite{NguyenR14}} & \multirow{2}{*}{\textemdash \textquotedbl \textemdash}& \multirow{2}{*}{$O(m)$ \cite{NguyenR14}} \\
  Subadditive &  &  &  & \\\hline
\end{tabular}\label{table:results}
\end{center}
\caption{Summary of results. Every entry has the best known approximation guarantee for the setting followed by the reference, from this paper or otherwise, that establishes it. Here, $\AlgAddshort$ and $\AlgSVshort$ respectively refer to Algorithms $\AlgAdd$ and $\AlgSV$.}
\end{table*}

To complement these results, we also provide a $\frac{\mathrm{e}}{\mathrm{e}-1}=1.5819$-factor hardness of approximation result for the submodular $\NSW$ problem in Section \ref{sec:hardness}. This hardness even applies to the case when the number of agents is constant. This shows that the general problem is strictly harder than the settings studied so far, for which $1.45$ factor approximation algorithms are known. 

For the special case of the submodular $\NSW$ problem where the number of agents is constant, we describe another algorithm with a \textit{matching $1.5819$ approximation factor} in Section \ref{sec:special_cases}, hence resolving this case completely. Finally in the same section, we show that for the symmetric additive $\NSW$ problem, the allocation of items returned by $\AlgAdd$ also satisfies $\EF$. Finally, a $1.45$-factor guarantee can be shown for the further special case of restricted additive valuations, by showing that the allocation returned by the algorithm in this case is $\PO$. This matches the current best known approximation factor for this case.

\subsection{Techniques}
We describe the techniques used in this work in a pedagogical manner. We start with a naive algorithm, and build progressively more sophisticated algorithms by fixing the main issues that result in bad approximation factors for the corresponding algorithms, finally ending with our algorithms.

All approaches compute, sometimes multiple, maximum weight matchings of weighted bipartite graphs. These graphs have agents and items in separate parts, and the edge weight assigned for matching an item $j$ to an agent $i$ is the logarithm of the valuation of the agent for the item, scaled by the agent's weight, i.e., $\eta_i \log{v_i(j)}$. Observe that, by taking the logarithm of the $\NSW$ objective \eqref{eqn:1}, we get an equivalent problem where the objective is to maximize the weighted sum of logarithms of agents' valuations. 

Let us first consider the additive $\NSW$ problem, and see what $\NSW$ is assured by computing a single such maximum weight matching. If the number of agents, say $n$, and items, say $m$, is the same, then the allocation obtained by matching items to agents according to such a matching results in the maximum $\NSW$ objective. ~\cite{NguyenR14} extend this algorithm to the general case, by allocating $n$ items according to one matching, and arbitrarily allocating the remaining items. They prove that this gives an $(m - n + 1)-$factor approximation algorithm. 

A natural extension to this algorithm is to compute more matchings instead of arbitrary allocations after a single matching. That is, compute one maximum weight matching, allocate items according to this matching, then repeat this process until all items are allocated. This repeated matching algorithm still does not help us get rid of the dependence on $m$ in the approximation factor. To see why, consider the following example.

\begin{example}\label{ex:rm_repmatch_counter}
Consider $2$ agents $A,B$ with weights $1$ each, and $m+1$ items. The valuations of $A$ and $B$ for the first item are $M+\epsilon$ and $M$ respectively. Agent $A$ values each of the remaining items at $1$, while $B$ only values the last of these at $1$, and values remaining $(m - 1)$ items at $0$. An allocation that optimizes the $\NSW$ of the agents allocates the first item to $B$, and allocate all remaining items to $A$. The optimal geometric mean is $(Mm)^{1/2}$. A repeated matching algorithm, in the first iteration, allocates the first item to $A$, and the last to $B$. No matching can now give non zero valuation to $B$. The maximum geometric mean that can be generated by such an algorithm is $((M+\epsilon+m-1)1)^{1/2}<\sqrt{M+m}$. Thus, using $M:=m$, the ratio of these two geometric means depends on $m$.
\end{example}

The above example shows the critical reason why a vanilla repeated matching algorithm may not work. In the initial matchings, the algorithm has no knowledge of how the agents value the entire set of items. Hence during these matchings it might allocate the high valued items to the wrong agents, thereby reducing the $\NSW$ by a large factor. To get around this problem, our algorithm needs to have some knowledge of an agent's valuation for the unallocated (low valued) set of items, while deciding how to allocate high valued items. It can then allocate the high valued items correctly with this foresight.

It turns out that there is a simple way to provide this foresight when the valuation functions are additive(-like). Effectively, we keep aside $O(n)$ high valued items of each agent, and assign the other items via a repeated matching algorithm. We then assign the items we had set aside to all agents via matchings that locally maximize the resulting $\NSW$ objective. The collective set of items put aside by all agents will have all the high valued items that required the foresight for correct allocation as a subset. Because these items are allocated after allocating the low valued items, this algorithm allocates the high valued items more smartly. In Section $\ref{sec:splc}$, we describe this algorithm, termed $\AlgAdd$, and show that it gives an $O(n)$ factor approximation for the $\NSW$ objective.

The above idea, however, does not work for submodular valuation functions. The main, subtle reason is as follows. Even in the additive case, the idea actually requires to keep aside not the set of items with highest valuation, but the set of items that leave a set of lowest valuation. For additive valuations, these sets are the same. However, it is known from \cite{SvitkinaF11} that finding a set of items of minimum valuation with lower bounded cardinality for monotone submodular functions is inapproximable within $\sqrt{m/\log m}$ factor, where $m$ is the number of items.  

We get around this issue and get the foresight for assigning high valued items in a different way. Interestingly, we use the technique of repeated matchings itself for this. In algorithm $\AlgSV$, we allocate items via repeated matchings, then release some of the initial matchings and re-match the items of these initial matchings. 

The idea is that the initial matchings will allocate all high valued items, even if incorrectly, and give us the set of items that must be allocated correctly. If the total number of all high valued items depends only on $n$, then the problem of maximizing the $\NSW$ objective when allocating this set of items is solved up to some factor of $n$ by applying a repeated matching algorithm. In Lemma $\ref{lem:numwinnerssub}$ we prove such an upper bound on the number of initial matchings to be released. 

Thus far, we have proved that we can allocate one set of items, the high valued items, approximately optimally. Now submodular valuations do not allow us to add valuations incurred in separate matchings to compute the total valuation of an agent. Getting such a repeated matchings type cumulative approach result in high total valuation requires the following natural modification in the approach. We redefine the edge weights used for computing matchings. We now consider the marginal valuations over items already allocated in previous matchings as edge weights rather than individual valuations. 

There are several challenges to prove this approach gives an allocation of high $\NSW$ overall. First, bounding the amount of valuation received by a particular agent as a fraction of her optimal allocation is difficult. This is because the subset of items allocated by the algorithm might be completely different from the set of optimal items. We can however give a relation between these two values and this is seen in Lemma $\ref{lem:main}.$  

Then, since we release and reallocate the items of initial matchings, now the set of items allocated to an agent can be completely different from the set before, changing all marginal utilities completely. It is thus non-trivial to combine the valuations from these stages too. This is done in the proof of Theorem $\ref{thm:mainsub}.$

Apart from this, we also have the following results in the paper that use different techniques. 

\noindent
\textbf{Submodular $\NSW$ with constant number of agents.} We completely resolve this case using a different approach that uses techniques of maximizing submodular functions over matroids developed in \cite{ChekuriVZ10} and a reduction from \cite{Vondrak08}. At a high level, we first maximize the continuous relaxations of agent valuation functions, then round them using a randomized algorithm to obtain an integral allocation of items. The two key results used in designing the algorithm are Theorems \ref{thm:multiple_submod} and \ref{thm:tail_bound_submod}.  

\noindent
\textbf{Hardness of approximation.} The submodular $\Alloc$ problem is to maximize the sum of valuations of agents over integral allocations of items. \cite{KhotLMM08} describe a reduction of $\Maxcolor$, which is $\classNP$-Hard to approximate within a constant factor, to $\Alloc$. We prove that this reduction also establishes the same hardness for the submodular $\NSW$ problem. 

\subsection{Further Related Work}
An extensive work has been done on special cases of the $\NSW$ problem. For the symmetric additive $\NSW$ problem, several constant-factor approximation algorithms have been obtained. The first such algorithm used an approach based on a variant of Fisher markets~\cite{ColeG18}, to achieve an approximation factor of $2\cdot e^{1/e} \approx 2.889$. Later, the analysis of this algorithm was improved to $2$~\cite{ColeDGJMVY17}. Another approach based on the theory of real stable polynomials gave an $e$-factor guarantee~\cite{AnariGSS17}. Recently, ~\cite{BarmanKV18} obtained the current best approximation factor of $e^{1/e} \approx 1.45$ using an approach based on approximate $\EF$ and $\PO$ allocation. These approaches have also been extended to provide constant-factor approximation algorithms for slight generalizations of additive valuations, namely the budget additive~\cite{GargHM18}, $\SPLC$~\cite{AnariMGV18}, and a common generalization of these two valuations~\cite{CheungCGGHM18}.

All these approaches exploit the symmetry of agents and the characteristics of additive-like valuation functions. For instance, the notion of a maximum bang-per-buck (MBB) item is critically used in most of these approaches. There is no such equivalent notion for the submodular case. This makes them hard to extend to the asymmetric case and to more general valuation functions.

Fair and efficient division of items to asymmetric agents with submodular valuations is an important problem, also raised in \cite{ColeDGJMVY17}. However, the only known result for this general problem is an $O(m)$-factor algorithm \cite{NguyenR14}, where $m$ is the number of items. 

Two other popular welfare objectives are the social welfare and max-min. In social welfare, the goal is to maximize the sum of valuations of all agents and in the max-min objective, the goal is to maximize the value of lowest-valued agent. The latter objective is also termed as the Santa Claus problem for the restricted additive valuations~\cite{BansalS06}.

The social welfare problem under submodular valuations has been completely resolved with a $\frac{e}{e-1} = 1.5819$-factor algorithm \cite{Vondrak08} and a matching hardness result~\cite{KhotLMM08}. Note that the additive case for this problem has a trivial linear time algorithm, hence it is perhaps unsurprising that a constant factor algorithm would exist for the submodular case. 

For the max-min objective, extensive work has been done on the restricted additive valuations case, resulting in constant factor algorithms for the same \cite{AnnamalaiKS15,DaviesRZ18}. However, for the unrestricted additive valuations the best approximation factor is $O(\sqrt{n}\log^3n)$~\cite{AsadpourS10}. For the submodular Santa Claus problem, there is an $O(n)$ factor algorithm \cite{KhotP07}. On the other hand, a hardness factor of $2$ is the best known lower bound for both settings \cite{BezakovaD05}. 

\noindent
\textbf{Organization of the paper:} In Section~\ref{sec:splc}, we describe the algorithm $\AlgAdd$ and analysis for the additive $\NSW$ problem. In Section \ref{sec:submod}, we present the algorithm $\AlgSV$ for submodular valuations. Section \ref{sec:hardness} contains the hardness proof for the submodular setting. The results for the special cases of submodular $\NSW$ with constant number of agents, symmetric additive $\NSW$, and symmetric additive $\NSW$ with restricted valuations are presented in Section \ref{sec:special_cases}. In Section \ref{sec:tightness}, we present counter examples to prove tightness of the analysis of Algorithms $\AlgSV$ and $\AlgAdd$. The final Section~\ref{sec:discussion} discusses possible further directions. 

\section{Additive Valuations}\label{sec:splc}
In this section, we present $\AlgAdd$, described in Algorithm \ref{NSWalgoADD}, for the asymmetric additive $\NSW$ problem, and prove the following approximation result.
\begin{theorem}\label{thm:additive}
The $\NSW$ objective of allocation $\vecx$, output by $\AlgAdd$ for asymmetric additive $\NSW$ problem, is at least $\nicefrac{1}{2n}$ times the optimal $\NSW$, denoted as $\OPT$, i.e., $\NSW(\vecx)\ge \frac{1}{2n}\OPT$.
\end{theorem}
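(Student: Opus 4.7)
The plan is to establish the per-agent inequality $v_i(X_i) \ge \tfrac{1}{2n}\,v_i(O_i)$ for every agent $i$, from which the claimed bound $\NSW(X) \ge \tfrac{1}{2n}\OPT$ on the weighted geometric mean is immediate (pointwise dominance passes through any weighted geometric mean). Fix an optimal allocation $O = (O_1, \dots, O_n)$ and let $X = (X_1, \dots, X_n)$ be the allocation returned by $\AlgAdd$. Let $H$ be the union over agents of the set-aside items (the $O(n)$ highest-valued items per agent, according to her own valuation), and $L = \Goods \setminus H$. Correspondingly split $X_i = X_i^L \cup X_i^H$ into the items assigned to agent $i$ during the repeated-matching phase on $L$ and during the local-$\NSW$ matching phase on $H$, and similarly $O_i = O_i^L \cup O_i^H$. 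By additivity, $v_i(O_i) = v_i(O_i^L) + v_i(O_i^H) \le 2\max\{v_i(O_i^L),v_i(O_i^H)\}$ and $v_i(X_i) \ge \max\{v_i(X_i^L), v_i(X_i^H)\}$, so it suffices to prove the two per-agent bounds
\[
\textrm{(A)}\ \ v_i(X_i^L) \ge \tfrac{1}{n}\,v_i(O_i^L), \qquad \textrm{(B)}\ \ v_i(X_i^H) \ge \tfrac{1}{n}\,v_i(O_i^H),
\]
for every agent $i$.

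For (A), I would exploit the defining property of $L$: for every $j \in L$ and every agent $i$, $v_i(j)$ is at most the $(k{+}1)$-st largest value for agent $i$, where $k = \Theta(n)$ is the set-aside count. The repeated maximum-weight matching assigns each agent at most one item per round and runs for roughly $|L|/n$ rounds. The core of the argument is a swap/exchange step: if at some round the matching could have served agent $i$ with a still-available item $j^\star \in O_i^L$ in place of what she actually received, then the matching was not of maximum weight, contradiction. Cascading these swaps across rounds, and invoking the $L$-threshold to control the slack between the log-weighted matching objective and the additive value we ultimately want to bound, should give $v_i(X_i^L) \ge v_i(O_i^L)/n$; the factor $n$ reflects that a single matching round can serve agent $i$ with at most one item, so at most $n$ rounds are needed per item of $O_i^L$ in the worst case.

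For (B), the set-aside phase is explicitly tailored to high-value items, and $|H| = O(n^2)$. Since each matching here is chosen to locally maximize the resulting $\NSW$ given the allocation built so far, an analogous exchange argument — now comparing $X_i^H$ against $O_i^H \subseteq H$ across the set-aside rounds — should yield $v_i(X_i^H) \ge v_i(O_i^H)/n$. Combining (A) and (B) with the splitting step above gives $v_i(X_i) \ge \tfrac{1}{2n}\,v_i(O_i)$, and hence $\NSW(X) \ge \tfrac{1}{2n}\OPT$.

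I expect the main obstacle to be bound (A). The per-round matching objective is a weighted sum of logarithms $\sum_i \eta_i \log v_i(\cdot)$, whereas what we want to lower-bound is the additive per-agent quantity $v_i(X_i^L) = \sum_{j\in X_i^L} v_i(j)$; bridging between ``sum of log'' and ``log of sum'', especially when the weights $\eta_i$ let the matching rank items quite differently from $v_i$, is the delicate step. The threshold property on $L$-items is precisely what keeps this log-versus-additive discrepancy under control, and it is why the set-aside phase is needed at all — Example~\ref{ex:rm_repmatch_counter} already shows that without the set-aside step a vanilla repeated matching can be arbitrarily bad. The factor $2$ in $1/(2n)$ is nothing more than the cost of splitting $v_i(O_i)$ between its $L$- and $H$-parts; any tightening would have to come from a more careful joint analysis of the two phases.
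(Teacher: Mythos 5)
Your plan rests on two premises, both of which fail to match what $\AlgAdd$ actually does and what the paper actually proves.

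First, $\AlgAdd$ does \emph{not} set aside a set $H$ of high-valued items and run two separate matching phases on $H$ and $L = \Goods \setminus H$. That description (from the informal ``Techniques'' overview) is not the literal algorithm. $\AlgAdd$ is a single repeated-matching pass: it computes a \emph{scalar} $u_i = v_i(\Goods_{i,[2n+1:m]})$ for each agent, uses $\eta_i\log(v_i(j) + u_i/n)$ as the edge weight only in the first round, and thereafter uses $\eta_i\log(v_i(j) + v_i(\vecx_i))$. There is no pool $H$, no pool $L$, and therefore no well-defined split $X_i = X_i^L \cup X_i^H$, $O_i = O_i^L \cup O_i^H$ on which to formulate your inequalities (A) and (B). Your entire decomposition has no referent in the algorithm.

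Second, you aim for a per-agent bound $v_i(X_i) \ge \tfrac{1}{2n} v_i(O_i)$, which is strictly stronger than the theorem and not what the paper establishes. The paper's argument is genuinely at the product level. Lemma~\ref{lem:splc} gives a per-agent bound only on the tail: $v_i(h_i^2,\dots,h_i^{\tau_i}) \ge u_i/n$, hence $v_i(\vecx_i)\ge v_i(h_i^1)+u_i/n$. But then the crucial step is \emph{not} per-agent: the assignment $i \mapsto g_i^1$ (each agent's top item in her own optimal bundle) is a feasible first-round matching, so by max-weight optimality
\[
\prod_i \bigl(v_i(h_i^1)+u_i/n\bigr)^{\eta_i} \;\ge\; \prod_i \bigl(v_i(g_i^1)+u_i/n\bigr)^{\eta_i},
\]
and each $v_i(g_i^1)+u_i/n \ge \tfrac{1}{2n}v_i(\vecx_i^*)$. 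Nothing here guarantees $v_i(h_i^1)\ge v_i(g_i^1)$ for any individual $i$; the matching may give one agent much less than her $g_i^1$ while compensating another in the product. Your route would have to prove this per-agent inequality, and neither your (A) nor your (B) does so. Indeed (A) as stated is false in general: a repeated matching captures at least a $1/n$-fraction of $v_i$'s value on the \emph{available} items after the agent may have lost up to $n$ items per round (this is exactly Lemma~\ref{splc:lem:marginal_item}: $v_i(h_i^t)\ge v_i(\Goods_{i,tn})$), so what it guarantees is $\tfrac{1}{n}v_i(L_{i,[n+1:|L|]})$, not $\tfrac{1}{n}v_i(O_i^L)$; if the top $n$ items of $L$ account for most of $v_i(L)$, your bound (A) can fail by an arbitrary factor. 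Finally, the ``log-versus-sum'' worry you raise is in fact not an obstacle the paper needs to bridge delicately: because the edge weight $\eta_i\log(v_i(j)+c)$ is monotone increasing in $v_i(j)$, the single-swap argument of Lemma~\ref{splc:lem:marginal_item} goes through directly, and the tail values then sum additively (Lemma~\ref{lem:splc}); the logarithm only enters the analysis once, at the product-level comparison against the feasible matching $i\mapsto g_i^1$.
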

$\AlgAdd$ is a single pass algorithm that allocates up to one item to every agent per iteration such that the $\NSW$ objective is locally maximized. An issue with a naive single pass, locally optimizing greedy approach is that the initial iterations work on highly limited information. As shown in Example $\ref{ex:rm_repmatch_counter}$, such algorithms can result in outcomes with very low $\NSW$ even for symmetric agents with additive valuation functions. In the example, although agent $A$ can be allocated an item of high valuation later, the algorithm does not \textit{know} this initially. Algorithm $\ref{NSWalgoADD}$ resolves this issue by pre-computing an approximate value that the agents will receive in later iterations, and uses this information in the edge weight definitions when allocating the first item to every agent. We now discuss the details of $\AlgAdd$. 

\subsection{Algorithm}
$\AlgAdd$ works in a single pass. For every agent, the algorithm first computes the value of $m-2n$ least valued items and stores this in $u_i$. $\AlgAdd$ then defines a weighted complete bipartite graph $\Gamma(\A, \Goods, \W)$ with edge weights $w(i,j)=\eta_i \log \left(v_i(j) + \frac{u_i}{n}\right),$ and allocates one item to each agent along the edges of a maximum weight matching of $\Gamma$. It then starts allocating items via repeated matchings. Until all items are allocated, $\AlgAdd$ iteratively defines graphs $\Gamma(\A, \Goods^{rem}, \W)$ with $\Goods^{rem}$ denoting the set of unallocated items and edge weights defined as $w(i,j)=\eta_i \log \left( v_i + v_i(j)\right)$, where $v_i$ is the valuation of agent $i$ for items that are allocated to her. $\AlgAdd$ then allocates at most one item to each agent according to a maximum weight matching of $\Gamma$. 
\begin{algorithm}[h!]
\label{NSWalgoADD}
\DontPrintSemicolon
\SetAlgoLined
\SetKwInOut{Input}{Input}
\SetKwInOut{Output}{Output}
\Input{A set $\A$ of $n$ agents with weights $\eta_i,\  \forall i\in\A$, a set $\Goods$ of $m$ indivisible items, and additive valuations $v_i:2^{\Goods}\rightarrow \mathbb{R}_+$, where $v_i(\Set)$ is the valuation of agent $i\in \A$ for a set of items $\Set\subseteq \Goods$.} 
\Output{An allocation that approximately optimizes the $\NSW$.}
\medskip
$\vecx_i \leftarrow \emptyset, u_i \leftarrow v_i(\Goods_{i, [2n+1:m]})\quad \forall i \in [n]$ \tcp*{$\Goods_{i,[a:b]}$ defined in Section \ref{subsec:SPLCnotation}}
Define weighted complete bipartite graph $\Gamma(\A,\Goods,\W)$ with weights $\W=\{w(i,j)\mid w(i,j)=\eta_i\log\left(v_i(j) + \frac{u_i}{n}\right),\forall i\in \A, j\in \Goods\}$ \;
Compute a maximum weight matching $\M$ for $\Gamma$\;
 $\vecx_i\leftarrow \vecx_i \cup \{j\mid (i,j) \in \M \},\ \forall i\in \A$\tcp*{allocate items according to $\M$}
 $\Goods^{rem}\leftarrow \Goods\backslash \{j \mid (i,j)\in \M \} $ \tcp*{update set of unallocated items} \medskip
\While{$\Goods^{rem} \neq \emptyset$}{
Define weighted complete bipartite graph $\Gamma(\A,\Goods^{rem},\W)$ with weights $\W=\{w(i,j)\mid w(i,j)=\eta_i\log(v_i(j) +v_i(\vecx_i)), \forall i\in \A, j\in \Goods^{rem}\}$ \;
 Compute a maximum weight matching $\M$ for $\Gamma$\;
 $\vecx_i\leftarrow \vecx_i \cup \{j\mid (i,j) \in \M \},\ \forall i\in \A$\tcp*{allocate items according to $\M$}
 $\Goods^{rem}\leftarrow \Goods^{rem}\backslash \{j \mid (i,j)\in \M \} $ \tcp*{remove allocated items}
}
Return $\vecx$
\caption{$\AlgAdd$ for the Asymmetric Additive $\NSW$ problem}
\end{algorithm}

\subsection{Notation}\label{subsec:SPLCnotation}
In the following discussion, we use $\vecx_i = \{h_i^1, \ldots, h_i^{\tau_i}\}$ to denote the set of $\tau_i$ items received by agent $i$ in $\AlgAdd$. We use $\vecx_i^* = \{g_i^1, \ldots, g_i^{\tau_i^*}\}$ to denote the set of $\tau_i^*$ items in $i$'s optimal bundle. Then for every $i$, all items in $\vecx_i$ and $\Goods$ are ranked according to the decreasing utilities as per $v_i$. We use the shorthand $[n]$ to denote the set $\{1, 2, \ldots, n\}$. Let $\Goods_{i,[a:b]}$ denotes the items ranked from $a$ to $b$ according to agent $i$ in $\Goods$, and $\vecx_{i,1:t}$ is the total allocation to agent $i$ from the first $t$ matching iterations. We also use $\Goods_{i, k}$ to denote the $k^{th}$ ranked item of agent $i$ from the entire set of items. For all $i$, we define $u_i$ as the minimum value for the remaining set of items upon removing at most $2n$ items from $\Goods$, i.e., $u_i = \min_{\Set \subseteq \Goods, |\Set| \leq 2n} v_i(\Goods \setminus \Set) = \Goods_{i, [2n+1, m]}$.\footnote{As the valuation functions are monotone, the minimum value will be obtained by removing exactly $2n$ items. The \emph{less than} accounts for the case when the number of items in $\Goods$ is fewer than $2n$.} 

\subsection{Analysis}\label{subsec:algSPLC}
To establish the guarantee of Theorem \ref{thm:additive}, we first prove a couple of lemmas.
\begin{lemma}\label{splc:lem:marginal_item}
$v_i(h_i^t)\geq v_i(\Goods_{i, tn}).$
\end{lemma}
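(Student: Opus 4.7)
The plan is an exchange argument on the maximum weight matching computed in iteration $t$ of $\AlgAdd$. First I would establish a counting fact: by the start of iteration $t$, the algorithm has performed at most $t - 1$ prior matchings, each allocating at most one item per agent, so at most $(t-1)n$ items have been removed from $\Goods^{rem}$. Consequently, among agent $i$'s top-ranked items $\Goods_{i,1}, \ldots, \Goods_{i,tn}$, at least $tn - (t-1)n = n$ remain available at the start of iteration $t$. Call this surviving set $S_i$; by construction every $g \in S_i$ satisfies $v_i(g) \geq v_i(\Goods_{i,tn})$.

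Next I would observe that the weight assigned in iteration $t$ to an edge $(i,j)$ has the uniform form $\eta_i \log(v_i(j) + C_i)$, where $C_i = u_i/n$ when $t = 1$ and $C_i = v_i(\vecx_{i,1:t-1})$ when $t \geq 2$. The critical point is that $C_i$ depends only on $i$ and on the iteration, not on the item $j$, so for each fixed agent $i$ the edge weight in this iteration is a strictly increasing function of $v_i(j)$.

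Now suppose for contradiction that $v_i(h_i^t) < v_i(\Goods_{i,tn})$, which forces $h_i^t \notin S_i$. The maximum weight matching $\M$ uses at most $n$ edges, and the edge incident to $i$ avoids $S_i$, so at most $n - 1$ items of $S_i$ are touched by $\M$. Since $|S_i| \geq n$, there exists $g^* \in S_i$ left unmatched by $\M$. Replacing the edge $(i, h_i^t)$ with $(i, g^*)$ yields a matching $\M'$ with
\[
w(\M') - w(\M) = \eta_i \bigl(\log(v_i(g^*) + C_i) - \log(v_i(h_i^t) + C_i)\bigr) > 0,
\]
contradicting the maximality of $\M$. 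Hence $v_i(h_i^t) \geq v_i(\Goods_{i,tn})$ as claimed.

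The only real subtlety is the boundary case $tn > m$, where $\Goods_{i,tn}$ is not defined; there one can adopt the convention $v_i(\Goods_{i,tn}) = 0$, making the inequality trivial since $v_i(h_i^t) \geq 0$. Apart from this, the exchange argument goes through uniformly for $t = 1$ and $t \geq 2$ because the only property of $C_i$ it relies on is independence from $j$, which lets the $-\log(v_i(h_i^t) + C_i)$ and $+\log(v_i(g^*) + C_i)$ terms combine cleanly into a strictly positive difference.
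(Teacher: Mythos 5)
Your proposal is correct and follows essentially the same route as the paper: the same counting argument that at most $(t-1)n$ items have been allocated before iteration $t$, so at least $n$ of agent $i$'s top $tn$ items remain, from which the claim follows by maximality of the matching. The paper simply asserts the final implication, whereas you spell out the exchange argument (and the observation that $C_i$ is item-independent, making the edge weight monotone in $v_i(j)$) and also handle the $tn > m$ boundary case, both of which are left implicit in the paper.
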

\begin{proof}
Since every iteration of $\AlgAdd$ allocates at most $n$ items, at the start of iteration $t$ at most $(t-1)n$ items are allocated. Thus at least $n$ items from $\Goods$ ranked between $1$ to $tn$ by agent $i$ are still unallocated. In the $t^{th}$ iteration the agent will thus get an item with value at least the value of $\Goods_{i, tn}$ and the lemma follows.
\end{proof}

\begin{lemma}\label{lem:splc}
$v_i(h_i^2, \ldots, h_i^{\tau_i}) \geq \frac{u_i}{n}.$
\end{lemma}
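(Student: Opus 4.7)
The plan is to use Lemma~\ref{splc:lem:marginal_item} to lower-bound each $v_i(h_i^t)$ by the value of a specific item in $\Goods$, and then compare the accumulated lower bound against $u_i = v_i(\Goods_{i,[2n+1:m]})$ via a block decomposition of $i$'s low-ranked items. The case $m \le 2n$ is immediate since then $u_i = v_i(\emptyset) = 0$; I henceforth assume $m > 2n$.

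For this case, I will partition the items of $\Goods$ ranked from $2n+1$ to $m$ (according to $i$'s preferences) into consecutive blocks $B_t := \{\Goods_{i,\, tn+1}, \ldots, \Goods_{i,\, (t+1)n}\}$ for $t = 2, 3, \ldots, T$ where $T := \lceil m/n \rceil - 1$ (the last block possibly smaller). Since items are listed in decreasing order of $v_i$, every item in $B_t$ has rank strictly greater than $tn$ and hence value at most $v_i(\Goods_{i, tn})$, giving $v_i(B_t) \le n \cdot v_i(\Goods_{i, tn})$. Applying Lemma~\ref{splc:lem:marginal_item} then yields $v_i(\Goods_{i, tn}) \le v_i(h_i^t)$ for every $t \ge 2$, so summing over $t = 2,\ldots,T$ produces
\[
u_i \;=\; \sum_{t=2}^{T} v_i(B_t) \;\le\; n \sum_{t=2}^{T} v_i(h_i^t) \;=\; n \cdot v_i(\{h_i^2, \ldots, h_i^T\}).
\]

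To finish, I must argue that $\tau_i \ge T$, so that $v_i(\{h_i^2,\dots,h_i^T\}) \le v_i(\{h_i^2,\dots,h_i^{\tau_i}\})$; I expect this to be the main technical point. Because $\AlgAdd$ allocates at most $n$ items per iteration, iteration $t$ begins with at least $m - (t-1)n$ remaining items, and this quantity is $\ge n$ for every $t \le T$ by the choice of $T$. A maximum-weight matching in a complete bipartite graph of $n$ agents to at least $n$ items can be chosen to saturate every agent (after the first iteration all log-edge-weights are finite, since $v_i(\vecx_i) > 0$ once any positively-valued item has been allocated, and a degenerate agent with $v_i \equiv 0$ contributes nothing to $\NSW$ regardless). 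Hence agent $i$ receives a new item in each of the first $T$ iterations, so $\tau_i \ge T$. Dividing the previous display by $n$ completes the proof.
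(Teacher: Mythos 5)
Your proof is correct and follows essentially the same route as the paper: you apply Lemma~\ref{splc:lem:marginal_item} to a block decomposition of agent $i$'s low-ranked items into consecutive chunks of size $n$, only with the block boundaries shifted by one from the paper's indexing $\Goods_{i,[tn:(t+1)n-1]}$. The supporting claim that $\tau_i \ge \lceil m/n\rceil - 1$ (so that $h_i^t$ exists for every $t$ used in the sum) is the same counting argument the paper invokes when it asserts $(\tau_i+1)n \ge m$.
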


\begin{proof}
Using Lemma $\ref{splc:lem:marginal_item}$ and since $v_i(\Goods_{i, tn}) \geq v_i(\Goods_{i, tn+k}), \ \forall k \in [n-1]$ 
$$v_i(h_i^t) \geq \frac{1}{n}v_i(\Goods_{i, [tn:(t+1)n - 1]})\enspace .$$
Thus,
$$v_i(h_i^2, \ldots, h_i^{\tau_i})=\sum_{t = 2}^{\tau_i} v_i(h_i^t)\geq \frac{1}{n}\sum_{t = 2}^{\tau_i} v_i(\Goods_{i,[tn:(t+1)n-1]})\enspace .$$
As at most $n$ items are allocated in every iteration, agent $i$ receives items for at least $\lfloor \frac{m}{n} \rfloor$ iterations.\footnote{Here we assume that the agents have non-zero valuation for every item. If it does not, the other case is also straightforward and the lemma continues to hold.} This implies that $(\tau_i + 1)n \geq m$ and hence,
\begin{align*}
     v_i(h_i^2, \ldots, h_i^{\tau_i}) &\geq \frac{1}{n} \big(v_i(\Goods_{i,[2n:m-1]})\big) \\
    &\geq \frac{1}{n} \big(v_i(\Goods_{i,[2n+1:m]})\big)  = \frac{1}{n} u_i\enspace .
\end{align*}
The second inequality follows as $v_i(\Goods_{i, 2n}) \geq v_i(\Goods_{i, m})$.
\end{proof}

We now prove the main theorem.
\begin{proof}[Proof of Theorem $\ref{thm:additive}$]
\begin{align*}
    \NSW(\vecx) &= \prod_{i = 1}^{n} \left( v_i(h_i^1, \ldots, h_i^{\tau_i})^{\eta_i} \right)^{\frac{1}{\sum_{i = 1}^{n} \eta_i }} \\
    &= \prod_{i = 1}^{n} \left( \left(v_i(h_i^1) + v_i(h_i^2 \ldots, h_i^{\tau_i}) \right)^{\eta_i} \right)^{\frac{1}{\sum_{i = 1}^{n} \eta_i }} \\
    &\geq \prod_{i = 1}^{n} \left( \left(v_i (h_i^1) + \frac{u_i}{n}\right)^{\eta_i} \right)^{\frac{1}{\sum_{i = 1}^{n} \eta_i }}, 
\end{align*}
where the last inequality follows from Lemma \ref{lem:splc}. During the allocation of the first item $h_i^1$, items $g_i^1$ of all agents are available. Thus, allocating each agent her own $g_i^1$ is a feasible first matching and we get
\begin{align*}
    \NSW(\vecx) &\geq \prod_{i = 1}^{n} \left( \left(v_i (g_i^1) + \frac{u_i}{n}\right)^{\eta_i} \right)^{\frac{1}{\sum_{i = 1}^{n} \eta_i }}. && \text{}
\end{align*}

Now, $u_i = \min_{\Set \subseteq \Goods, |\Set| \leq 2n} v_i(\Goods \setminus \Set)$. Suppose we define, $\Set_i^* = \arg \min_{|\Set| \leq 2n, \Set \subseteq \vecx_i^*} v_i(\vecx_i^* \setminus \Set)$, then $v_i(\vecx_i^* \setminus \Set_i^*) \leq u_i$. It follows by using $\Set_i = \arg\min_{\Set \subseteq \Goods, |\Set| \leq 2n} v_i(\Goods \setminus \Set)$, we get $u_i = v_i(\Goods \setminus \Set_i) \geq v_i(\vecx_i^* \setminus \Set_i) \geq v_i(\vecx_i^* \setminus \Set_i^*)$. Thus,
\begin{align*}    
    \NSW(\vecx) &\geq \prod_{i = 1}^{n} \left( \left(\frac{1}{2n}v_i (\Set_i^*) + \frac{1}{n}v_i(\vecx_i^* \setminus \Set_i^*)\right)^{\eta_i} \right)^{\frac{1}{\sum_{i} \eta_i }} \\
    &\geq \frac{1}{2n}\prod_{i = 1}^{n} \left( v_i(\vecx_i^*)^{\eta_i} \right)^{\frac{1}{\sum_{i = 1}^{n} \eta_i }} = \frac{1}{2n} \OPT\enspace .\qedhere
\end{align*}
\end{proof}

\begin{remark}\label{rem:AlgAddvsNaive}
When $\AlgAdd$ is applied to the instance of Example \ref{ex:rm_repmatch_counter}, it results in a better allocation than that of a naive repeated matching approach. Stage $1$ of $\AlgAdd$ computes $u_i$ as $m-2n$ and $0$ for A and B respectively. When this value is included in the edge weight of the first bipartite graph $\Gamma$, the resulting matching gives B the first item, and A some other item. Subsequently A gets all remaining items, resulting in an allocation having the optimal $\NSW$.  
\end{remark}

The algorithm $\AlgAdd$ easily extends to budget additive ($\BA)$ and separable piecewise concave ($\SPLC)$ valuations using the following small changes: In $\BA$, $u_i := \min(c_i, \Goods_{1, [2n+1:m]})$ where $c_i$ is the utility cap for agent $i$, and in $\SPLC$, $u_i$ needs to be calculated while considering each copy of an item as a separate item. In both cases, the edge weights in the bipartite graphs will use marginal utility (as we use in the submodular valuations case in Section~\ref{sec:submod}). Lemma \ref{lem:splc} and the subsequent proofs can be easily extended for these cases by combining ideas from Lemma \ref{lem:main} and Proof of Theorem \ref{thm:mainsub}. Thus, we obtain the following theorem. 

\begin{theorem}
The $\NSW$ objective of allocation $\vecx$, output by $\AlgAdd$ for asymmetric $\BA$ (and $\SPLC$) $\NSW$ problem, is at least $\nicefrac{1}{2n}$ times the optimal $\NSW$, denoted as $\OPT$, i.e., $\NSW(\vecx)\ge \frac{1}{2n}\OPT$.
\end{theorem}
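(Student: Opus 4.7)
The plan is to adapt the three-step template used in the proof of Theorem~\ref{thm:additive}. For both $\BA$ and $\SPLC$, the modified algorithm uses the marginal utility $v_i(j\mid\vecx_i)$ in place of the raw value $v_i(j)$, so for $t>1$ the iteration-$t$ edge weight becomes $\eta_i\log v_i(\vecx_i\cup\{j\})$, while the first-iteration weight remains $\eta_i\log(v_i(j)+u_i/n)$ with $u_i$ redefined as stated. The three steps I would carry out are: (i) a marginal-value analog of Lemma~\ref{splc:lem:marginal_item}, asserting that the $t$-th item awarded to agent $i$ has marginal value over her earlier items at least that of $\Goods_{i,tn}$; (ii) a telescoping analog of Lemma~\ref{lem:splc}, $v_i(\vecx_i)-v_i(h_i^1)\ge u_i/n$; and (iii) the same closing computation as in Theorem~\ref{thm:additive}, now interpreted under marginal utilities.

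For $\BA$, the marginal has the explicit form $v_i(j\mid\vecx_i)=\min\{v_i(j),\,c_i-v_i(\vecx_i)\}$, and the marginals telescope to $v_i(\vecx_i)$. I would handle the algorithm's output by a case split. If the cap is saturated, $v_i(\vecx_i)=c_i\ge v_i(\vecx_i^*)$ since $v_i(\vecx_i^*)\le c_i$, so agent $i$ already satisfies the $1/(2n)$ bound. Otherwise all marginals along $\vecx_i$ equal raw item values and the arguments of Lemmas~\ref{splc:lem:marginal_item} and~\ref{lem:splc} carry over verbatim. The choice $u_i=\min\{c_i,v_i(\Goods_{i,[2n+1:m]})\}$ ensures that $v_i(\vecx_i^*\setminus \Set_i^*)\le u_i$ still holds: either $c_i$ is the bottleneck and the bound is immediate, or the additive argument from Theorem~\ref{thm:additive} applies unchanged. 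The closing chain additionally invokes subadditivity of $v_i$ to write $v_i(\vecx_i^*)\le v_i(\Set_i^*)+v_i(\vecx_i^*\setminus \Set_i^*)$, together with $v_i(\Set_i^*)\le 2n\,v_i(g_i^1)$ from $|\Set_i^*|\le 2n$.

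For $\SPLC$, I would treat every copy as a distinct item, enlarging $\Goods$ to the multiset of copies. Concavity $v_i(j,1)\ge v_i(j,2)\ge\cdots$ makes the marginal value of the $k$-th copy of item $j$ monotone non-increasing in $k$, so ranking copies globally by $v_i(\cdot)$ yields a consistent ordering for the Lemma~\ref{splc:lem:marginal_item} analog: at the start of iteration $t$ at most $(t-1)n$ copies have been allocated, so the maximum-weight matching can give agent $i$ a copy whose marginal value under her current bundle dominates that of the $tn$-th copy in her global ranking. The Lemma~\ref{lem:splc} analog then gives $v_i(\vecx_i)-v_i(h_i^1)\ge u_i/n$ where $u_i$ is computed over the multiset, and the closing step of Theorem~\ref{thm:additive} goes through with $\Set_i^*$ redefined accordingly over copies.

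The main obstacle is the marginal-utility book-keeping, since the ranking of items shifts across iterations once raw values are replaced by marginals. For $\BA$ this is side-stepped by the cap-saturation case split, which separates the regime in which marginals coincide with raw values from the regime in which the bound is trivially met. For $\SPLC$ the key technical ingredient is that concavity forces copies still outside $\vecx_i$ to dominate those already inside $\vecx_i$ in the global ranking, so the Lemma~\ref{splc:lem:marginal_item} comparison against $\Goods_{i,tn}$ survives; this is precisely the marginal-comparison argument that will appear in full generality in Lemma~\ref{lem:main} for submodular valuations. Once these comparisons are in place, the telescoping identity and the first-matching $u_i/n$ trick from Theorem~\ref{thm:additive} transfer without further modification.
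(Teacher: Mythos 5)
Your proposal follows essentially the same template the paper gestures at (it only gives a one-paragraph remark for this theorem): replace raw values with marginal utilities in the edge weights, redefine $u_i$ with a $\min$ against the cap or over the multiset of copies, and re-run the three steps of Theorem~\ref{thm:additive}. Your per-step adaptations are the right ones, and your observation that concavity makes the global ranking of copies stable under the shift to marginal utilities is correct and is precisely the ingredient needed for the $\SPLC$ analog of Lemma~\ref{splc:lem:marginal_item}.

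There is, however, one genuine gap in the $\BA$ argument as written. You case-split \emph{per agent} on whether the algorithm saturated $i$'s cap, handling saturated agents by $v_i(\vecx_i)=c_i\ge v_i(\vecx_i^*)$ and unsaturated agents by the additive argument. But the middle step of the proof of Theorem~\ref{thm:additive} --- comparing the first matching $\{h_i^1\}_i$ against the feasible alternative $\{g_i^1\}_i$ --- is a \emph{joint} inequality over all agents, coming from maximality of the first matching under $\sum_i\eta_i\log(v_i(j)+u_i/n)$. You cannot simply restrict that inequality to the unsaturated agents: the items $g_i^1$ for unsaturated $i$ may collide with items $h_j^1$ taken by saturated agents $j$, so the mixed matching you would need is not obviously feasible. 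Moreover, for saturated agents the Step-1 bound $v_i(\vecx_i)\ge v_i(h_i^1)+u_i/n$ can fail, since $v_i(h_i^1)+u_i/n$ can exceed $c_i$; so you also cannot keep those agents in the product in the Step-1 form. The clean repair is to use the capped quantity $\min\{c_i,\,v_i(h_i^1)+u_i/n\}$ uniformly: Step~1 then holds for every agent (saturated or not), Step~2 goes through provided the first-iteration edge weight is $\eta_i\log\min\{c_i,\,v_i(j)+u_i/n\}$ rather than the uncapped version, and Step~3 splits per agent on which term achieves the $\min$. Your ``otherwise marginals along $\vecx_i$ equal raw values'' claim also needs a one-line justification that the comparison item $\Goods_{i,tn}$ cannot have a capped marginal when $i$ is unsaturated (otherwise the cap would in fact be hit), but that is a small fix.
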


\section{Submodular Valuations}\label{sec:submod}
In this section we present the $\AlgSV$, given in Algorithm~\ref{NSWalgoSubMod}, for approximating the $\NSW$ objective under submodular valuations. We will prove the following relation between the $\NSW$ of the allocation $\vecx$ returned by $\AlgSV$ and the optimal geometric mean $\OPT$.
\begin{theorem}\label{thm:mainsub}
The $\NSW$ objective of allocation $\vecx$, output by $\AlgSV$ for asymmetric submodular $\NSW$ problem, is at least $\nicefrac{1}{2n(\log{n}+2)}$ times the optimal $\NSW$, denoted as $\OPT$, i.e., $\NSW(\vecx)\ge \frac{1}{2n(\log{n}+2)}\OPT$.
\end{theorem}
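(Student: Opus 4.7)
The plan is to decompose each agent's final bundle $\vecx_i$ into the portion received from re-matching the released items (the ``high-value'' stage) and the portion received from the surviving repeated matchings (the ``low-value'' stage), bound each piece separately against a matching decomposition of the optimal bundle, and then combine them using monotonicity of $v_i$. The overall $\frac{1}{2n(\log n + 2)}$ factor will naturally split as $\tfrac{1}{2}$ from combining the two stages, $\tfrac{1}{n}$ from the matching argument per stage, and $\log n + 2$ from a harmonic summation specific to the submodular low-value analysis.

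Concretely, let $R$ denote the set of items released from the initial matchings; by Lemma~\ref{lem:numwinnerssub}, $|R|$ is bounded by a function of $n$ alone. Write $\vecx_i = \vecx_i^{R} \sqcup \vecx_i^{L}$, where $\vecx_i^{R}$ consists of the items agent $i$ receives from re-matching $R$ and $\vecx_i^{L}$ consists of the items she receives from the surviving later iterations; correspondingly, partition the optimum bundle as $\vecx_i^{*} = \vecx_i^{*,R} \sqcup \vecx_i^{*,L}$, where $\vecx_i^{*,R}$ collects those optimal items that should be pitted against the high-value stage. Submodularity yields $v_i(\vecx_i^{*}) \le v_i(\vecx_i^{*,R}) + v_i(\vecx_i^{*,L})$, and monotonicity yields $v_i(\vecx_i) \ge \max\{v_i(\vecx_i^{R}), v_i(\vecx_i^{L})\} \ge \tfrac{1}{2}\bigl(v_i(\vecx_i^{R}) + v_i(\vecx_i^{L})\bigr)$. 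Hence it suffices to prove the two per-agent guarantees $v_i(\vecx_i^{L}) \ge \frac{1}{n(\log n + 2)} v_i(\vecx_i^{*,L})$ and $v_i(\vecx_i^{R}) \ge \frac{1}{n} v_i(\vecx_i^{*,R})$, then take the weighted geometric mean across agents.

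For the low-value bound I would analyze the surviving repeated-matching iterations, where edge weights are $\eta_i \log v_i(j \mid \vecx_{i,1:t-1})$. Since a maximum-weight matching dominates the weight of the alternative matching that pairs each agent with her own $t$-th-ranked remaining optimal item, submodularity permits lower bounding $v_i(h_i^t \mid \vecx_{i,1:t-1})$ by a specific marginal of an optimum item; telescoping $\sum_t v_i(h_i^t \mid \vecx_{i,1:t-1}) = v_i(\vecx_i^{L})$ while accounting for the fact that at most $n$ items are committed per iteration yields the harmonic factor $\log n + 2$, much as in the standard analysis of greedy submodular maximization. For the high-value bound I would replay a matching argument in the style of Theorem~\ref{thm:additive}: because $|R|$ depends only on $n$, re-matching $R$ with log-marginal edge weights and pairing each agent with her own optimal high-value item is a feasible alternative matching, and Lemma~\ref{lem:main} would be invoked to bridge the gap between items actually allocated to $i$ and items in $\vecx_i^{*,R}$, which can be disjoint sets.

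The main obstacle is the coupling between the two stages: when the items of $R$ are released and re-matched \emph{after} the later iterations have been performed, the final bundle $\vecx_i$ differs from the bundles with respect to which the late-iteration edge weights were computed, so a naive ``sum of marginals across the release boundary'' argument fails because the marginals simply are not consistent. The fix is to derive each of the two bounds \emph{within} its own stage, where marginals are internally consistent, and only combine them at the very end via monotonicity $v_i(\vecx_i) \ge v_i(\vecx_i^{L})$ and $v_i(\vecx_i) \ge v_i(\vecx_i^{R})$ together with AM/GM; this decoupling, along with a careful application of Lemma~\ref{lem:main} to handle the mismatch between algorithm-allocated and optimum items, is the crux of the proof.
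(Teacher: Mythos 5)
Your high-level decomposition strategy — bound each stage internally against a matching piece of the optimal bundle, then combine via monotonicity, subadditivity, and an averaging step that eats the factor $1/2$ — is the same skeleton the paper uses. However, the mechanism you sketch for each stage is wrong in a way that would make the proof fail.

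First, the $\log n + 2$ factor does not arise from a harmonic summation in the analysis of the surviving repeated matchings. The paper's bound for Phase~2 is Lemma~\ref{lem:main}, which gives $v_i(h_i^1,\dotsc,h_i^{\tau_i^2}) \geq u_i^*/n$ with a clean $1/n$, proven through the induction in Lemma~\ref{lem:submod:remaining_val} and Corollary~\ref{corr:submod:oneitemval} (each Phase-2 iteration recovers at least a $1/n$-share of the remaining attainable optimal marginal, against a denominator that never exceeds $n$ because at most $n$ of agent $i$'s remaining optimal items are contested per round). There is no greedy-covering-style harmonic series, and the factor you claim for the low-value stage is off by a $\log n$.

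Second, and more seriously, you have the $\log n$ attached to the wrong stage and you do not use the fact that Phase~3 is a \emph{single} matching: each agent receives at most one item from the released pool $R$. Because $|R| = O(n\log n)$ (Phase~1 runs $\lceil \log n \rceil$ rounds of $n$ items each), the high-value part of agent $i$'s optimal bundle, namely $(\vecx_i^* \setminus \bar\vecx_i^*) \cup \Set_i^1$, has size at most $n(\log n + 2)$, and the best single item from it — which Lemma~\ref{lem:numwinnerssub} guarantees the agent can secure in Phase~3 — recovers at least a $\frac{1}{n(\log n+2)}$ fraction of its value. So the $\log n$ penalty is precisely the price of getting only one re-matched item per agent, not a byproduct of Phase~2's analysis. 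Your proposed bound $v_i(\vecx_i^R) \geq \frac{1}{n}v_i(\vecx_i^{*,R})$ cannot be proven for the algorithm as stated; it is a factor of $\log n$ too strong. Relatedly, Lemma~\ref{lem:main} has nothing to do with the re-matching stage — it is the Phase-2 bound and cannot be ``invoked to bridge the gap'' for items in $\vecx_i^{*,R}$. To repair the argument you must (i) restore the $1/n$ to the repeated-matching stage via Lemma~\ref{lem:main}, (ii) attribute the $1/(n(\log n + 2))$ to the single-item Phase-3 step using Lemma~\ref{lem:numwinnerssub} and the size bound on $R$, and (iii) split the optimal bundle along the boundary $\bar\vecx_i^* \setminus \Set_i^1$ versus its complement, not along some abstract high/low partition.
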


\subsection{Algorithm }
$\AlgSV$ takes as input an instance of the $\NSW$ problem, denoted by $\NSWins$, where $\A$ is the set of agents, $\Goods$ is the set of items, and $\Vals=\{v_1,v_2\dotsc,v_n\}$ is the set of agents' monotone submodular valuation functions, and generates an allocation vector $\vecx$. Each agent $i \in \A$ is associated with a positive weight $\eta_i.$

$\AlgSV$ runs in three phases. In the first phase, in every iteration, we define a weighted complete bipartite graph $\Gamma(\A,\Goods^{rem},\W)$ as follows. $\Goods^{rem}$ is the set of items that are still unallocated ($\Goods^{rem}=\Goods$ initially). The weight of edge $(i,j), i\in \A, j\in \Goods^{rem}$, denoted by $w(i,j)\in \W$, is defined as the logarithm of the valuation of the agent for the singleton set having this item, scaled by the agent's weight. That is, $w(i,j)=\eta_i\log (v_i( j))$. We then compute a maximum weight matching in this graph, and allocate to agents the items they were matched to (if any). This process is repeated for $\lceil\log{n}\rceil$ iterations. 

We perform a similar repeated matching process in the second phase, with different edge weight definitions for the graphs $\Gamma$. We start this phase by assigning empty bundles to all agents. Here, the weight of an edge between agent $i$ and item $j$ is defined as the logarithm of the valuation of agent $i$ for the set of items currently allocated to her in Phase $2$ of $\AlgSV$, scaled by her weight. That is, if we denote the items allocated in $t$ iterations of Phase $2$ as $\vecx_{i, t}^2$, in $(t + 1)^{st}$ iteration, $w(i,j)=\eta_i \log (v_i(\vecx_{i, t}^{2}\cup \{j\})).$

In the final phase, we re-match the items allocated in Phase $1$. We release these items from their agents, and define $\Goods^{rem}$ as union of these items. We define $\Gamma$ by letting the edge weights reflect the total valuation of the agent upon receiving the corresponding item, i.e., $w(i,j)=\eta_i\log (v_i(\vecx_i^2\cup \{j\}))$, where $\vecx_i^2$ is the final set of items allocated to $i$ in Phase $2$. We compute one maximum weight matching for $\Gamma$ so defined, and allocate all items along the matched edges. All remaining items are then arbitrarily allocated. The final allocations to all agents, denoted as $\vecx=\{\vecx_i\}_{i\in \A}$, is the output of $\AlgSV$. 
\begin{algorithm}[h!]
\DontPrintSemicolon
\SetAlgoLined
\SetKwInOut{Input}{Input}
\SetKwInOut{Output}{Output}
\Input{A set $\A$ of $n$ agents with weights $\eta_i,\  \forall i\in\A$, a set $\Goods$ of $m$ indivisible items, and valuations $v_i:2^{\Goods}\rightarrow \mathbb{R}_+$, where $v_i(\Set)$ is the valuation of agent $i\in \A$ for a set of items $\Set\subseteq \Goods$.} 
\Output{An allocation that approximately optimizes the $\NSW$ objective}\medskip
\nonl \textbf{Phase $\bm 1$}: \medskip\;
$\vecx_i^1\leftarrow \emptyset,\  \forall i\in \A$ \tcp*{$\vecx_i^1$'s store the set of items allocated in Phase $1$}
$\mathcal{G}^{rem}\leftarrow \Goods$ \tcp*{set of unallocated items before every iteration}
$t\leftarrow 0$ \tcp*{iteration counter}
\While{$\Goods^{rem} \neq \emptyset$ and $t \leq \lceil\log{n}\rceil$}{
Define weighted complete bipartite graph $\Gamma(\A,\Goods^{rem},\W)$ with weights $\W=\{w(i,j)\mid w(i,j)=\eta_i\log(v_i(j)),\forall i\in \A, j\in \Goods^{rem}\}$ \;
 Compute a maximum weight matching $\M$ for $\Gamma$\;
 $\vecx_i^1 \leftarrow \vecx_i^1 \cup \{j\}, \quad \forall (i, j) \in \M$ \tcp*{allocate items to agents according to $\M$}
 $\Goods^{rem}\leftarrow \Goods^{rem}\backslash \{j \mid (i,j)\in \M \} ;\ t\leftarrow t+1$ \tcp*{remove allocated items}
}
\BlankLine
\nonl \textbf{Phase $\bm 2$}: \medskip\;
For all $i$, $\vecx_i^2 \leftarrow \emptyset$ \tcp*{$\vecx_i^2$'s are the sets of items allocated in Phase $2$}
\While{$\mathcal{G}^{rem} \neq \emptyset$}{
Define weighted complete bipartite graph $\Gamma(\A,\Goods^{rem},\W)$ with weights $W=\{w(i,j)\mid w(i,j)=\eta_i\log(v_i(\vecx_{i}^2 \cup \{ j \})),\forall i\in \A, j\in \Goods^{rem}\}$ \; 
Compute a maximum weight matching $\M$ for $\Gamma$\;
$\vecx_i^2 \leftarrow \vecx_i^2 \cup \{j\}, \quad \forall (i, j) \in \M$ \tcp*{allocate items to agents according to $\M$}
$\Goods^{rem}\leftarrow \Goods^{rem}\backslash \{j \mid (i,j)\in \M \}$ \tcp*{remove allocated items}
}
\BlankLine
\nonl \textbf{Phase $\bm3$}: \medskip\;
$\Goods^{rem}\leftarrow \bigcup_i \vecx_i^1$ \tcp*{release items allocated in Phase 1}
Define weighted complete bipartite graph $\Gamma(\A,\Goods^{rem},\W)$ with $W=\{w(i,j)\mid w(i,j)=\eta_i\log(v_i(\vecx_i^2\cup \{j\})),\forall i\in \A, j\in \Goods^{rem}\}$\; 
Compute a maximum weight matching $\M$ for $\Gamma$\;
$\vecx_i^2 \leftarrow \vecx_i^2 \cup \{j\}, \quad \forall (i, j) \in \M$\tcp*{allocate items to agents according to $\M$}
Arbitrarily allocate rest of the items to agents, let $\vecx=\{\vecx_i\}_{i\in\A}$ denote the final allocation \;
\Return{$\vecx$}
\caption{$\AlgSV$ for the Asymmetric Submodular $\NSW$ problem}\label{NSWalgoSubMod}
\end{algorithm}
\subsection{Notation }\label{subsec:SVnotation}

There are three phases in $\AlgSV$. We denote the set of items received by agent $i$ in Phase $p\in \{1,2,3\}$ by $\vecx_i^{p}$, and its size $|\vecx_i^{p}|$ by $\tau_i^p$. Similarly, $\vecx_i$ and $\tau_i$ respectively denote the final set of items received by agent $i$ and the size of this set. Note that Phase $3$ releases and re-allocates selected items of Phase $1$, thus $\tau_i$ is not equal to $ \tau_i^1 + \tau_i^2 + \tau_i^3.$ The items allocated to the agents in Phase $2$ are denoted by $\vecx_i^2 = \{ h_i^1,h_i^2\dotsc, h_i^{\tau_i^2} \}$. We also refer to the complete set of items received in iterations $1$ to $t$ of Phase $p$ by $\vecx_{i, t}^{p},$ for any $p\in\{1,2,3\}.$

For the analysis, the marginal utility of an agent $i$ for an item $j$ over a set of items $\Set$ is denoted by $v_i(j\mid \Set)=v_i(\{j\}\cup \Set)-v_i(\Set)$. Similarly, we denote by $v_i(\Set_1\mid \Set_2)$ the marginal utility of set $\Set_1$ of items over set $\Set_2$ where $\Set_1,\Set_2 \subseteq \Goods$ and $\Set_1 \cap \Set_2 = \emptyset.$ We use $\vecx^*=\{\vecx_i^*\mid i\in \A\}$ to denote the optimal allocation of all items that maximizes the $\NSW$, and $\tau_i^*$ for $|\vecx_i^*|$. For every agent $i$, items in $\vecx_i^*$ are ranked so that $g_i^j$ is the item that gives $i$ the highest marginal utility over all higher ranked items. That is, for $j=1$, $g_i^1$ is the item that gives $i$ the highest marginal utility over $\emptyset,$ and for all $2 \leq j \leq \tau_i^*,$ $g_i^j = \argmax_{g\in \vecx_i^* \backslash \{g_i^1, \dotsc, g_i^{j - 1} \}} v_i(g\mid \{ g_i^1, \ldots, g_i^{j - 1}\})$.\footnote{Since the valuations are monotone submodular, this ensures that $v_i(g_i^j \mid \{ g_i^1, \ldots, g_i^{j - 1}\}) \geq v_i(g_i^k \mid \{ g_i^1, \ldots, g_i^{k - 1}\})$ for all $ k \geq j.$ This implies that in any subset of $\ell$ items in the optimal bundle, the highest ranked item's marginal contribution is at least $1/\ell$ times that of this set, when the marginal contribution is counted in this way.} 

We let $\Bar{\vecx}_i^*$ denote the set of items from $\vecx_i^*$ that are not allocated (to any agent) at the end of Phase $1$, and we denote by $\Bar{v}_i^*=v_i(\Bar{\vecx}_i^*)$ and $\Bar{\tau}_i^*=|\Bar{\vecx}_i^*|$ respectively the total valuation and number of these items. For convenience, to specify the valuation for a set of items $\Set_1 = \{s_1^1, \ldots s_1^{k_1} \} $, instead of $v_i(\{s_1^1, \ldots, s_1^{k_1}\}),$ we also use $v_i(s_1^1, \ldots, s_1^{k_1}).$ Similarly, while defining the marginal utility of a set $\Set_2 = \{ s_2^1, \ldots, s_2^{k_2} \}$ over $\Set_1$ instead of writing $v_i(\{s_2^1, \ldots, s_2^{k_2}\} \mid \{s_1^1, \ldots, s_1^{k_1}\}),$ we also use $v_i( s_2^1, \ldots, s_2^{k_2} \mid s_1^1, \ldots, s_1^{k_1}).$

\subsection{Analysis}\label{subsec:algSV}
We will prove Theorem \ref{thm:mainsub} using a series of supporting lemmas. We first prove that in Phase $2$, the minimum marginal utility of an item allocated to an agent over her current allocation from previous iterations of Phase $2$ is not too small. This is the main result that allows us to bound the minimum valuation of the set of items allocated in Phase $2$.  

In the $t^{th}$ iteration of Phase $2$, $\AlgSV$ finds a maximum weight matching. Here, the algorithm tries to assign to each agent an item that gives her the maximum marginal utility over her currently allocated set of items. However, every agent is competing with $n - 1$ other agents to get this item. So, instead of receiving the best item, she might lose a few high ranked items to other agents. Consider the intersection of the set of items that agent $i$ loses to other agents in the $t^{th}$ iteration with the set of items left from her optimal bundle at the beginning of $t^{th}$ iteration.  We will refer to this set of items by $\Set_i^t.$ Let the number of items in $\Set_i^t$ be $k_i^t.$ 

For the analysis of $\AlgSV,$ we also introduce the notion of \textit{attainable items} for every iteration. $\Set_i^t$ is the set of an agent's preferred items that she lost to other agents. The items that are now left are referred as the set of \textit{attainable} items of the agent. Note that in any matching, every agent gets an item equivalent to her best attainable item, that is, an item for which her marginal valuation (over her current allocation) is at least equal to that from her highest marginally valued attainable item. 

For all $i$, we denote the intersection of the set of $\textit{attainable}$ items in the $t^{th}$ iteration and agent $i$'s optimal bundle $\vecx_i^*$ by $\Bar{\vecx}_{i, t}^*$, and let $u_i^* = v_i(\Bar{\vecx}_{i, 1}^*) = v_i(\Bar{\vecx}_i^* \setminus \Set_i^1)$ be the total valuation of \textit{attainable} items at the first iteration of Phase $2$. In the following lemma, we prove a lower bound on the marginal valuation of the set of \textit{attainable} items over the set of items that the algorithm has already allocated to the agent.

\begin{lemma}\label{lem:submod:remaining_val}
 For any $j \in [\tau_i^2 - 1]$, 
\[v_i(\Bar{\vecx}_{i, j+1}^*\mid h_i^1, \dotsc, h_i^j) \geq u_i^* - k_i^2v_i(h_i^1)- \sum_{t=2}^{j}k_i^{t+1} v_i(h_i^t\mid h_i^1, \dotsc, h_i^{t-1})-v_i(h_i^1,h_i^2\dotsc ,h_i^{j})\enspace . \]
\end{lemma}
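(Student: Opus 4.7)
The plan is to prove the inequality by telescoping a one-step bound, which is equivalent to induction on $j$. Introduce the shorthand $X_t := \{h_i^1,\dots,h_i^t\}$ with $X_0 = \emptyset$, and the auxiliary quantity $\Phi_t := v_i(\Bar{\vecx}_{i,t}^* \cup X_{t-1})$, so that $\Phi_1 = v_i(\Bar{\vecx}_{i,1}^*) = u_i^*$ and the left-hand side of the lemma is exactly $\Phi_{j+1} - v_i(X_j)$. The whole proof then reduces to establishing, for each $1 \le t \le j$, the one-step bound $\Phi_t - \Phi_{t+1} \le k_i^{t+1}\, v_i(h_i^t \mid X_{t-1})$; telescoping from $t=1$ to $t=j$ gives $\Phi_{j+1} \ge u_i^* - \sum_{t=1}^{j} k_i^{t+1}\, v_i(h_i^t \mid X_{t-1})$, and subtracting $v_i(X_j)$ yields the stated inequality (with $v_i(h_i^1\mid X_0) = v_i(h_i^1)$ explaining the lonely first term).

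For the one-step bound I first use the set-theoretic relation $\Bar{\vecx}_{i,t+1}^* = \Bar{\vecx}_{i,t}^* \setminus (\Set_i^{t+1} \cup \{h_i^t\})$: passing from iteration $t$ to iteration $t+1$ of Phase $2$, agent $i$ acquires $h_i^t$, and by definition the items of $\Set_i^{t+1}$ are precisely the optimal items still in the attainable pool at iteration $t$ that get awarded to other agents in iteration $t+1$. Monotonicity then gives $\Phi_t \le v_i(\Bar{\vecx}_{i,t+1}^* \cup X_t \cup \Set_i^{t+1})$, and submodularity (diminishing marginals over the superset $\Bar{\vecx}_{i,t+1}^* \cup X_t \supseteq X_{t-1}$) yields
\[
v_i\!\left(\Bar{\vecx}_{i,t+1}^* \cup X_t \cup \Set_i^{t+1}\right) \;\le\; \Phi_{t+1} + v_i(\Set_i^{t+1} \mid X_{t-1}).
\]

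To finish, I bound $v_i(\Set_i^{t+1} \mid X_{t-1}) \le k_i^{t+1}\, v_i(h_i^t \mid X_{t-1})$. The key observation is that every $g \in \Set_i^{t+1}$ is unallocated at the end of iteration $t$ (it only leaves $\Goods^{rem}$ in iteration $t+1$), and therefore unmatched in iteration $t$ as well. Applying the maximum-weight property of iteration $t$'s matching via a single-edge swap — replacing $(i,h_i^t)$ by $(i,g)$ is a valid matching since $g$ is unmatched, and cannot strictly increase the total weight — forces $v_i(g\mid X_{t-1}) \le v_i(h_i^t\mid X_{t-1})$ for every such $g$. Summing the $k_i^{t+1}$ bounds and invoking subadditivity of marginals (a standard consequence of submodularity) completes the one-step bound. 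The main delicate point is precisely this off-by-one indexing: items of $\Set_i^{t+1}$ must be charged against the earlier iterate $h_i^t$, not the contemporaneous $h_i^{t+1}$, and the charge is justified because those items were still attainable during iteration $t$ of Phase $2$.
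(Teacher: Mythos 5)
Your proof is correct and relies on exactly the same three ingredients as the paper's: the set recursion $\Bar{\vecx}_{i,t+1}^* = \Bar{\vecx}_{i,t}^*\setminus(\Set_i^{t+1}\cup\{h_i^t\})$, the submodular diminishing-marginals bound that lets you charge $\Set_i^{t+1}$ against the smaller base $X_{t-1}$, and the max-weight-matching swap argument giving $v_i(g\mid X_{t-1})\le v_i(h_i^t\mid X_{t-1})$ for each $g\in\Set_i^{t+1}$. The only real difference is presentational: by carrying the union potential $\Phi_t=v_i(\Bar{\vecx}_{i,t}^*\cup X_{t-1})$ and telescoping, you sidestep the paper's explicit induction with its two-case split on whether $h_i^t\in\Bar{\vecx}_{i,t}^*$, which is a modest but genuine streamlining of the same argument.
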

\begin{proof}
We prove this lemma using induction on the number of iterations $t.$ Consider the base case when $t = 2.$ Agent $i$ has already been allocated $h_i^1.$ She now has at most $\Bar{\tau}_i^*-k_i^1$ items left from $\Bar{\vecx}_i^*$ that are not yet allocated. In the next iteration the agent loses $k_i^2$ items to other agents and receives $h_i^2$. Each of the remaining $\Bar{\tau}_i^* - k_i^1$ items have marginal utility at most $v_i(h_i^1)$ over $\emptyset.$ Thus, the marginal utility of these items over $h_i^1$ is also at most $v_i(h_i^1).$ We bound the total marginal valuation of $\Bar{\vecx}_{i,2}^*$ over $\{h_i^1\}$, by considering two cases. 
 
 \noindent
 \textbf{Case 1:} $h_i^1 \notin \Bar{\vecx}_{i, 1}^*$: By monotonicity of $v_i$, $v_i(\Bar{\vecx}_{i, 2}^* \mid h_i^1) \geq v_i(\Bar{\vecx}_{i, 2}^*) - v_i(h_i^1)= v_i(\Bar{\vecx}_{i, 1}^* \setminus \Set_i^2) - v_i(h_i^1).$
 \medskip 
 
\noindent \textbf{Case 2:} $h_i^1 \in \Bar{\vecx}_{i, 1}^*$: Here, $v_i(\Bar{\vecx}_{i, 2}^* \mid h_i^1) = v_i(\Bar{\vecx}_{i, 2}^* \cup \{ h_i^1\}) - v_i(h_i^1)= v_i(\Bar{\vecx}_{i, 1}^* \setminus \Set_i^2) - v_i(h_i^1).$ 

\noindent
 In both cases, submodularity of valuations and the fact that for all $j \in \Set_i^2, v_i(j) \leq v_i(h_i^1)$ implies,
 \begin{align*}
v_i(\Bar{\vecx}_{i, 2}^* \mid h_i^1) \ \geq\ v_i(\Bar{\vecx}_{i, 1}^*) - v_i(\Set_i^2) - v_i(h_i^1)\ \geq\ u_i^* - k_i^2v_i(h_i^1) - v_i(h_i^1),
 \end{align*}
 
 \noindent
 proving the base case. Now assume the lemma is true for all $t\le r$ iterations, for some $r$, i.e.,
\begin{equation*}
     v_i(\Bar{\vecx}_{i, r}^* \mid h_i^1, \dotsc, h_i^{r - 1}) \geq u_i^* - k_i^2v_i(h_i^1)- \sum_{t=2}^{r - 1}k_i^{t+1} v_i(h_i^t \mid h_i^1, \dotsc, h_i^{t-1})-v_i(h_i^1,h_i^2\dotsc ,h_i^{r-1}).
 \end{equation*}
 Consider the $(r+1)^{st}$ iteration. Again, we analyze two cases. 
 \begin{case}
     \item[$h_i^r \notin \Bar{\vecx}_{i, r}^*$: ]
    \vspace{-5mm}
    \begin{align*}
      v_i(\Bar{\vecx}_{i, r+1}^* \mid h_i^1, \ldots, h_i^r) &= v_i(\Bar{\vecx}_{i, r}^* \setminus \Set_i^{r + 1} \mid h_i^1, \ldots, h_i^r)\\
     &\geq v_i(\Bar{\vecx}_{i, r}^* \mid h_i^1, \ldots, h_i^r) - v_i(\Set_i^{r+1} \mid h_i^1, \ldots, h_i^r) \\
     &\geq v_i(\Bar{\vecx}_{i, r}^* \mid h_i^1, \ldots, h_i^r) - v_i(\Set_i^{r+1} \mid h_i^1, \ldots, h_i^{r - 1})\\
      &\geq v_i(\Bar{\vecx}_{i, r}^* \mid h_i^1, \ldots, h_i^{r - 1}) - v_i(h_i^{r} \mid h_i^1, \ldots, h_i^{r - 1})- v_i(\Set_i^{r+1} \mid h_i^1, \ldots, h_i^{r - 1}).
      \end{align*}
      The submodularity of $v_i$ gives the first two inequalities, and monotonicity of $v_i$ implies the last.
  \item[$h_i^r \in \Bar{\vecx}_{i, r}^*$:]
 \vspace{-5mm}
 \begin{align*}
      v_i(\Bar{\vecx}_{i, r+1}^* \mid h_i^1, \ldots, h_i^r)&= v_i(\Bar{\vecx}_{i, r+1}^* \cup \{ h_i^r \} \mid h_i^1, \ldots, h_i^{r - 1})- v_i(h_i^r \mid h_i^1, \ldots, h_i^{r - 1})\\
     &= v_i(\Bar{\vecx}_{i, r}^* \setminus \Set_i^{r+1} \mid h_i^1, \ldots, h_i^{r - 1}) - v_i(h_i^r \mid h_i^1, \ldots, h_i^{r - 1})\\
    &\geq v_i(\Bar{\vecx}_{i, r}^*  \mid h_i^1, \ldots, h_i^{r - 1}) - v_i(h_i^r \mid h_i^1, \ldots, h_i^{r - 1})- v_i(\Set_i^{r+1} \mid h_i^1, \ldots, h_i^{r-1}). 
 \end{align*}
Here, the second expression follows as $\Bar{\vecx}_{i, r}^* = \Bar{\vecx}_{i, r+1}^* \cup \{h_i^r\} \cup \Set_i^{r+1}$, and the last follows from the definition of submodularity of the valuations.
\end{case}
In both cases, from the induction hypothesis we get,
\begin{equation*}
     \begin{split}
         v_i(\Bar{\vecx}_{i, r+1}^* \mid h_i^1, \ldots, h_i^r) \geq  u_i^* &- k_i^2v_i(h_i^1)- \sum_{t=2}^{r - 1}k_i^{t+1} v_i(h_i^t \mid h_i^1, \dotsc, h_i^{t-1}) -  v_i(h_i^1,h_i^2\dotsc ,h_i^{r-1}) \\ & \quad -v_i(h_i^{r} \mid h_i^1, \ldots, h_i^{r - 1})- v_i(\Set_i^{r+1} \mid h_i^1, \ldots, h_i^{r - 1}).
     \end{split}
     \end{equation*}
     Finally, since $\AlgSV$ assigns the item with highest marginal utility from the set of \textit{attainable} items, and each item in $\Set_i^{r+1}$ is attainable at $r^{th}$ iteration,
     \begin{equation*}
     \begin{split}
     v_i(\Bar{\vecx}_{i, r+1}^* \mid h_i^1, \ldots, h_i^r) \geq\, u_i^* &- k_i^2v_i(h_i^1)- \sum_{t=2}^{r - 1}k_i^{t+1} v_i(h_i^t \mid h_i^1, \dotsc, h_i^{t-1})-v_i(h_i^1,h_i^2\dotsc ,h_i^{r-1}) \\&- v_i(h_i^{r} \mid h_i^1, \ldots, h_i^{r - 1}) -k_i^{r+1}v_i(h_i^r \mid h_i^1, \ldots, h_i^{r - 1})\\
     = \,u_i^* &- k_i^2v_i(h_i^1)-\sum_{t=2}^{r}k_i^{t+1} v_i(h_i^t \mid h_i^1, \dotsc, h_i^{t-1})- v_i(h_i^1,h_i^2\dotsc ,h_i^{r}).\qedhere
     \end{split}
     \end{equation*}
 \end{proof}

The above lemma directly allows us to give a lower bound on the marginal valuation of item received by the agent in $(j+1)^{th}$ iteration over the items received in previous iterations. We state and prove this in the following corollary.
\begin{corollary}\label{corr:submod:oneitemval}
 For any $j \in [\tau_i^2 - 1], $
 \begin{align*}
     v_i(h_i^{j + 1} \mid h_i^1, \dotsc, h_i^j) \geq \frac{1}{\Bar{\tau}_i^* - \sum_{t = 1}^{j + 1} k_i^t} \bigg( u_i^* -k_i^2v_i(h_i^1)- \sum_{t=2}^{j}k_i^{t+1} v_i(h_i^t \mid h_i^1, \dotsc, h_i^{t-1}) -v_i(h_i^1,\dotsc ,h_i^{j})\bigg).
\end{align*}
\end{corollary}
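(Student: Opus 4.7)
The plan is to obtain the corollary as a direct consequence of Lemma~\ref{lem:submod:remaining_val} via a standard submodular averaging step. First I would use the selection rule of $\AlgSV$: in the $(j{+}1)$-st iteration of Phase~$2$, the maximum weight matching ensures that the item $h_i^{j+1}$ matched to agent $i$ has marginal utility over $\{h_i^1,\ldots,h_i^j\}$ at least as large as that of any single \emph{attainable} item. Since every element of $\Bar{\vecx}_{i,j+1}^*$ is by definition attainable in that iteration, this gives
\[ v_i(h_i^{j+1}\mid h_i^1,\ldots,h_i^j)\ \ge\ \max_{g\in \Bar{\vecx}_{i,j+1}^*} v_i(g\mid h_i^1,\ldots,h_i^j). \]

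Next I would apply submodularity of $v_i$: the marginal value of the whole set $\Bar{\vecx}_{i,j+1}^*$ over $\{h_i^1,\ldots,h_i^j\}$ is upper bounded by the sum of the individual marginals of its elements over the same set, so the maximum marginal is at least the average, yielding
\[ v_i(h_i^{j+1}\mid h_i^1,\ldots,h_i^j)\ \ge\ \frac{1}{|\Bar{\vecx}_{i,j+1}^*|}\, v_i(\Bar{\vecx}_{i,j+1}^*\mid h_i^1,\ldots,h_i^j). \]
A simple counting argument then bounds $|\Bar{\vecx}_{i,j+1}^*|$ from above by $\Bar{\tau}_i^* - \sum_{t=1}^{j+1} k_i^t$, since at each iteration $t\le j{+}1$ exactly $k_i^t$ items of the still-available optimal bundle are removed by being claimed by other agents. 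Substituting this cardinality bound and plugging in the lower bound on $v_i(\Bar{\vecx}_{i,j+1}^*\mid h_i^1,\ldots,h_i^j)$ supplied by Lemma~\ref{lem:submod:remaining_val} reproduces the claimed inequality verbatim.

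The only delicate point I expect is the cardinality accounting: one must confirm that the items $h_i^t$ allocated to agent $i$ in earlier Phase~$2$ iterations do not need to be separately subtracted from $\Bar{\tau}_i^*$ in addition to the $k_i^t$ lost items. This is fine because $\Bar{\tau}_i^* - \sum_{t=1}^{j+1} k_i^t$ is used merely as an upper bound on $|\Bar{\vecx}_{i,j+1}^*|$ (it is only potentially loose when some $h_i^t$ coincides with an element of $\vecx_i^*$), and dividing by a larger number only weakens the averaging inequality in the correct direction for the lower bound we wish to establish.
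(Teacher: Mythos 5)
Your proposal is correct and follows essentially the same route as the paper: invoke the matching guarantee to compare $h_i^{j+1}$ against the best attainable item, apply the submodular averaging fact over $\Bar{\vecx}_{i,j+1}^*$, bound its cardinality by $\Bar{\tau}_i^* - \sum_{t=1}^{j+1}k_i^t$, and plug in Lemma~\ref{lem:submod:remaining_val}. The paper even makes the same remark you flag as the delicate point, namely that when some $h_i^t$ happens to lie in $\Bar{\vecx}_{i,t}^*$ the true denominator only shrinks, which preserves the inequality.
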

\begin{proof}
In any setting with a set of items $\Set = \{ s_1, \ldots s_k \}$, and a monotone submodular valuation $\mathit{v}$ on this set, if $v(\Set) = \mathit{u}$, then there exists an item $s \in \Set$ such that $\mathit{v}(s) \geq \mathit{u} / k.$ 
Thus, with $\Set=\Bar{\vecx}_{i,j+1}^*$, $k=\Bar{\tau}_i^* - \sum_{t = 1}^{j + 1} k_i^t$, for the submodular valuation function $v_i(\cdot\mid \{h_i^1, \ldots, h_i^j\})$, we can say that at iteration $j+1$, $h_i^{j+1}$ will have a marginal valuation at least,
\begin{equation*}
\frac{1}{\Bar{\tau}_i^* - \sum_{t = 1}^{j + 1} k_i^t} v_i(\Bar{\vecx}_{i, j+1}^* \mid h_i^1, \ldots, h_i^j)\enspace .  
\end{equation*}
Together with Lemma \ref{lem:submod:remaining_val}, this proves the corollary. Note that at any iteration $t$, if the received item $h_i^t$ is from $\Bar{\vecx}_{i, t}^*,$ then the denominator reduces further by $1$, and the bound still holds.
\end{proof}

In the following lemma, we give a lower bound on the total valuation of the items received by the agent in Phase $2$.
 \begin{lemma}\label{lem:main}
     $v_i(h_i^1, \dotsc, h_i^{\tau_i^2}) \ge \frac{u_i^*}{n}.$ 
 \end{lemma}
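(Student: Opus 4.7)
The plan is to apply Corollary~\ref{corr:submod:oneitemval} at the \emph{final} Phase~$2$ iteration (iteration $\tau_i^2$) and then combine it with two simple structural observations to isolate the total Phase~$2$ valuation. Throughout, I will write $M_t = v_i(h_i^t \mid h_i^1,\dots,h_i^{t-1})$, $V_j = \sum_{t=1}^{j} M_t$, and $L_j = \sum_{t=1}^{j} k_i^{t+1} M_t$, so the target inequality reads $V_{\tau_i^2} \ge u_i^*/n$.

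The first structural observation is that at the start of iteration $\tau_i^2$ the residual pool $\Goods^{rem}$ has size at most $n$: otherwise, since each matching on the complete bipartite graph allocates $\min(n,|\Goods^{rem}|)$ items, at least one more Phase~$2$ iteration would necessarily follow. Therefore $|\bar{\vecx}_{i,\tau_i^2}^*| \le |\Goods^{rem}| \le n$. Using the footnote of Corollary~\ref{corr:submod:oneitemval}, which allows the denominator to be replaced by the true cardinality $|\bar{\vecx}_{i,\tau_i^2}^*|$ rather than the weaker upper bound $\bar{\tau}_i^* - \sum_t k_i^t$, I obtain at $j = \tau_i^2 - 1$ the inequality $n \cdot M_{\tau_i^2} \ge |\bar{\vecx}_{i,\tau_i^2}^*| \cdot M_{\tau_i^2} \ge u_i^* - V_{\tau_i^2 - 1} - L_{\tau_i^2 - 1}$.

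The second structural observation is that in any matching only $n-1$ other agents can steal items from $i$'s optimal pool, so $k_i^t \le n-1$ for every $t$, and hence $L_{\tau_i^2 - 1} \le (n-1)\,V_{\tau_i^2 - 1}$. Substituting this into the previous inequality collapses it to $n M_{\tau_i^2} + n V_{\tau_i^2 - 1} \ge u_i^*$, i.e.\ $n V_{\tau_i^2} \ge u_i^*$, which is exactly the claim.

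The main obstacle I expect is cleanly handling the degenerate situations. If $|\bar{\vecx}_{i,\tau_i^2}^*| = 0$, the averaging step in the corollary is ill-defined, but Lemma~\ref{lem:submod:remaining_val} applied at $j = \tau_i^2 - 1$ already yields $0 \ge u_i^* - V_{\tau_i^2 - 1} - L_{\tau_i^2 - 1}$, and the same $(n-1)$-bound on $L$ closes the argument. The corner case $\tau_i^2 = 1$ is dispatched directly: $h_i^1$ is the best attainable item in iteration $1$, so $M_1 \ge u_i^*/|\bar{\vecx}_{i,1}^*| \ge u_i^*/n$ (since $|\Goods^{rem}| \le n$ in that scenario). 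I also need to verify that agent $i$ is matched in every Phase~$2$ iteration so that her personal index $\tau_i^2$ coincides with the final Phase~$2$ round; this follows from the bipartite graph being complete with finite, positive edge weights under the standing assumption that valuations are strictly positive.
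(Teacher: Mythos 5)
Your overall plan follows the paper's proof closely: invoke Corollary~\ref{corr:submod:oneitemval} at $j = \tau_i^2 - 1$, bound the averaging denominator by $n$, then use $k_i^t \le n-1$ to telescope. That structure is correct, and the degenerate cases you identify are the right ones to worry about. However, your first ``structural observation'' is false, and it is the step that is supposed to establish the denominator bound.

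You assert that at the \emph{start} of iteration $\tau_i^2$ the residual pool satisfies $|\Goods^{rem}| \le n$, reasoning that otherwise one more Phase~$2$ iteration would follow. The fact that another iteration follows does not show that agent $i$ is matched in it. Your later claim that ``agent $i$ is matched in every Phase~$2$ iteration'' is also untrue: on a complete bipartite graph with finite weights, a maximum-weight matching matches $\min(n, |\Goods^{rem}|)$ pairs, so when $|\Goods^{rem}| < n$ in the final round only some agents are matched, and $i$ may be excluded. Concretely, with $m = n+1$ items, iteration~$1$ allocates $n$ items and iteration~$2$ allocates one; if agent $i$ is not matched in iteration~$2$ then $\tau_i^2 = 1$ while $|\Goods^{rem}| = n+1 > n$ at the start of iteration~$1$. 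In general $|\Goods^{rem}|$ at the start of iteration $\tau_i^2$ can be as large as $2n-1$, which would only yield the weaker bound $u_i^*/(2n)$.

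The correct claim, which the paper uses, is about the state \emph{after} iteration $\tau_i^2$: since $i$ is unmatched in iteration $\tau_i^2 + 1$ (if it exists), fewer than $n$ items remain at that point, so at most $n-1$ items of $\bar{\vecx}_i^*$ are still unallocated then. Combined with the observation that $\bar{\vecx}_{i,\tau_i^2}^*$ consists of those still-unallocated optimal items plus possibly $h_i^{\tau_i^2}$ itself, this gives $|\bar{\vecx}_{i,\tau_i^2}^*| \le n$, which is what the denominator should be (as the footnote to Corollary~\ref{corr:submod:oneitemval} permits). Your telescoping argument then goes through exactly as the paper's does. So the gap is local and repairable, but the bound $|\Goods^{rem}| \le n$ at the start of $\tau_i^2$ as stated is not valid and cannot serve as the justification.
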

 \begin{proof}
 Recall that $u_i^*$ is the valuation of the items from $\Bar{\vecx}_i^*$ \emph{after} she loses items in $\Set_i^1$ to other agents in the first iteration of Phase $2$ and $\Bar{\tau}_i^*$ is the number of items in $\Bar{\vecx}_i^*.$ From Corollary \ref{corr:submod:oneitemval}, total valuation of the items obtained by agent $i$ in Phase $2$ is bounded as follows.
 \begin{align*}
      v_i(h_i^1, \dotsc h_i^{\tau_i^2}) & = v_i(h_i^1, \dotsc, h_i^{\tau_i^2 - 1}) + v_i(h_i^{\tau_i^2}\mid h_i^1,\dotsc, h_i^{\tau_i^2-1}) \\
      & \geq v_i(h_i^1, \dotsc, h_i^{\tau_i^2 - 1}) +\frac{1}{\Bar{\tau_i}^*-\sum_{t=0}^{\tau_i^2-1} k_i^{t+1}}\bigg( u_i^* -k_i^2 v_i(h_i^1) 
        -\sum_{t=2}^{\tau_i^2-1}k_i^{t+1} v_i(h_i^t \mid h_i^1,\dotsc h_i^{t-1}) \\ 
	   &  \ \hspace{10cm}  - v_i(h_i^1,\dotsc,h_i^{\tau_i^2-1}) \bigg).
\end{align*}

By definition, $\tau_i^2$ is the last iteration of Phase $2$ in which agent $i$ gets matched to some item. After this iteration, at most $n$ items from her optimal bundle remain unallocated, else she would have received one more item in the $(\tau_i^2+1)^{st}$ iteration. This means the optimal number of items $\Bar{\tau}_i^* - \sum_{t=0}^{\tau_i^2-1} k_i^{t+1} \leq n$, hence the denominator of the second term in the above equation is at most $n$. Again, we note here that if at any iteration $t,$ the item assigned to agent $i$ was from $\Bar{\vecx}_{i, t}^*$, then the denominator will be further reduced by $1$ for all such iterations, and the inequality still remains true when $k_i^t$ is replaced by $k_i^t + 1$. Combined with the fact that an agent can lose at most $n-1$ items in every iteration, we get $k_i^t\le n-1$, implying,
\begin{align*}
      v_i(h_i^1, \dotsc h_i^{\tau_i^2}) &\geq v_i(h_i^1, \dotsc, h_i^{\tau_i^2 - 1}) +  \frac{1}{n}\bigg(u_i^*-k_i^2v_i(h_i^1)-\sum_{t=2}^{\tau_i^2-1}k_i^{t+1} v_i(h_i^t \mid h_i^1,\dotsc h_i^{t-1}) \\
      &\hspace{10cm}- v_i(h_i^1,h_i^2\dotsc,h_i^{\tau_i^2-1})\bigg)\\
      &\geq v_i(h_i^1, \dotsc, h_i^{\tau_i^2 - 1})+ \frac{1}{n}\bigg(u_i^*-(n-1)v_i(h_i^1)- \sum_{t=2}^{\tau_i^2-1}(n-1) v_i(h_i^t \mid h_i^1,\dotsc h_i^{t-1}) \\
      &\hspace{10cm}- v_i(h_i^1,h_i^2\dotsc,h_i^{\tau_i^2-1})\bigg)\\
      &=v_i(h_i^1, \dotsc, h_i^{\tau_i^2 - 1}) + \frac{1}{n}\bigg(u_i^*-
      (n-1)v_i(h_i^1,h_i^2\dotsc,h_i^{\tau_i^2-1}) - v_i(h_i^1,h_i^2\dotsc,h_i^{\tau_i^2-1})\bigg)\\ &=\frac{u_i^*}{n}.\qedhere
\end{align*}
\end{proof}

\begin{remark}
In Lemma $\ref{lem:submod:remaining_val}$ and its subsequent Corollary \ref{corr:submod:oneitemval} and Lemma \ref{lem:main}, if $u_i^* - k_i^2v_i(h_i^1)-\sum_{t=2}^{j}k_i^{t+1} v_i(h_i^t \mid h_i^1, \dotsc, h_i^{t-1})-v_i(h_i^1, \dotsc ,h_i^{j})$ becomes negative for any $j \in [\tau_i^2 - 1]$, then we have
\begin{align*}
   u_i^* &\leq k_i^2v_i(h_i^1)+\sum_{t=2}^{j}k_i^{t+1} v_i(h_i^t \mid h_i^1, \dotsc, h_i^{t-1})+v_i(h_i^1, \dotsc ,h_i^{j}) \\
   &\leq  (n - 1)v_i(h_i^1)+\sum_{t=2}^{j}(n - 1) v_i(h_i^t \mid h_i^1, \dotsc, h_i^{t-1})+ v_i(h_i^1, \dotsc ,h_i^{j}) \\
   &= n \cdot v_i(h_1, \ldots, h_i^j) \leq n \cdot v_i(h_1, \ldots, h_i^{\tau_i^2 - 1}),
\end{align*}
which implies that Lemma $\ref{lem:main}$ holds.
\end{remark}

We now bound the minimum valuation that can be obtained by every agent in Phase $3$. Recall that $g^1_i$ is the item that gives the highest marginal utility over the empty set to agent $i$. Before proceeding, we define 
\[\Goods_i^1 := \{g \in \Goods \ |\ v_i(g \mid \emptyset) \geq v_i(g_i^1 \mid \emptyset)\}\enspace .\]
 \begin{lemma}\label{lem:numwinnerssub}
 Consider the complete bipartite graph where the set of agents $\A$, and the set of items allocated in the first Phase of $\AlgSV$ are the parts, and edge weights are the weighted logarithm of the agent's valuation for the bundle of items containing the item adjacent to the edge and items allocated in Phase $2$. That is, consider  $\Gamma(\A,\Goods=\bigcup_i\vecx_i^1,\W=\{w(i,j)=\eta_i\log(v_i(\{j\}\cup \vecx_i^2))\})$. In this graph, there exists a matching where each agent $i$ gets matched to an item from their highest valued set of items $\Goods_i^1$.
\end{lemma}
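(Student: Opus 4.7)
The plan is to establish Hall's marriage condition for the bipartite graph described, in which agents are on the left, items of $\bigcup_i \vecx_i^1$ are on the right, and $(i, j)$ is an edge iff $j \in \Goods_i^1$. By Hall's theorem it suffices to verify that $|N(S)| \geq |S|$ for every $S \subseteq \A$. The central tool is a \emph{swap property} of the iteration-$1$ max-weight matching $\M_1$ of Phase~$1$: if $\M_1(i) \notin \Goods_i^1$ for some $i$, then every item of $\Goods_i^1$ must already be matched in $\M_1$, since otherwise reassigning $i$ to an unmatched item $g \in \Goods_i^1$ strictly raises the total matching weight by $\eta_i(\log v_i(g) - \log v_i(\M_1(i))) > 0$, contradicting maximality. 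This partitions $\A = \bar D \cup D$ with $\bar D = \{i : \M_1(i) \in \Goods_i^1\}$, giving $\Goods_i^1 \subseteq \bigcup_i \vecx_i^1$ for every $i \in D$ and $\M_1(i) \in \Goods_i^1 \cap \vecx_i^1$ for every $i \in \bar D$.

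To verify Hall's condition on $S$, I would propose a system of distinct representatives: $\phi(i) = \M_1(i)$ for $i \in S \cap \bar D$ and $\phi(i) = g_i^1$ for $i \in S \cap D$. Each $\phi(i)$ lies in $\Goods_i^1 \cap \bigcup_i \vecx_i^1$. Within-class distinctness is automatic (injectivity of $\M_1$; the $g_i^1$ are pairwise distinct since the optimal bundles $\vecx_i^*$ are disjoint). The only possible cross-collisions $\M_1(i_1) = g_{i_2}^1$ correspond to values of the map $\sigma \colon D \to \A$, $\sigma(i) = \M_1^{-1}(g_i^1)$. A second swap argument rules out cycles of $\sigma$ inside $D$: a putative cycle $i_1 \to \cdots \to i_k \to i_1$ in $\sigma\vert_D$ would admit the simultaneous reassignment $i_j \mapsto g_{i_j}^1$ (in place of $\M_1(i_j) = g_{i_{j-1}}^1$), producing a valid matching whose weight strictly exceeds that of $\M_1$ because every $i_j \in D$ satisfies $v_{i_j}(g_{i_j}^1) > v_{i_j}(\M_1(i_j))$, contradicting maximality. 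Hence every $\sigma$-orbit is a chain terminating at some $r \in \bar D$, and all cross-collisions for $S$ are confined to these chain endpoints.

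The last step resolves each chain by a rotation: every intermediate $i_j \in D \cap S$ claims its own $g_{i_j}^1$, the item $\M_1(r) = g_{i_{k-1}}^1$ is released, and the endpoint $r \in \bar D \cap S$ adopts an alternate item in $\Goods_r^1 \cap \bigcup_i \vecx_i^1$ distinct from the rotated ones. The main obstacle is exhibiting such an alternate for each chain endpoint, which is where the $\lceil \log n \rceil$ iterations of Phase~$1$ enter the argument: applying the swap property to iterations $2, \ldots, \lceil \log n \rceil$ shows that for every $r \in \bar D$ either $\vecx_r^1 \cap \Goods_r^1$ contains a second element (some $\M_t(r)$ for $t \geq 2$, which lies in $\vecx_r^1$ and is therefore disjoint from the rotated items, which reside in the other agents' Phase-$1$ bundles), or the entire $\Goods_r^1$ becomes absorbed into $\bigcup_i \vecx_i^1$ during Phase~$1$, in which case $g_r^1$ itself — distinct from every $g_{i_j}^1$ by disjointness of optimal bundles — is available as the alternate. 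A final consistency check then handles any residual conflicts between alternates from different chains via local augmenting paths (equivalently, Hall's defect form applied to the restricted bipartite graph), completing the verification of Hall's condition and yielding the desired matching.
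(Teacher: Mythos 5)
Your proposal takes a genuinely different route from the paper, and it contains a real gap at the last step.

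The paper proves the lemma by a halving induction over the iterations of Phase~$1$: after $t$ iterations there exists a matching of agents into the allocated items in which at most $n/2^t$ agents fail to be matched to an item of their own $\Goods_i^1$ (the base case and inductive step both follow from the observation that a maximum-weight matching either matches many agents into their own $\Goods_i^1$, or else allocates at least one fresh item from each of many $\Goods_i^1$ sets, and releasing those items lets one fix at least half of the failing agents). After $\lceil\log n\rceil + 1$ iterations the count drops below $1$, so no agent fails. Your swap observation --- if $\M_1(i)\notin\Goods_i^1$ then every item of $\Goods_i^1$ is already matched --- is the same combinatorial engine that the paper's argument implicitly runs on, but your outer structure (partition into $D$ and $\bar D$, explicit SDR, $\sigma$-chain dismantling, alternate selection for chain endpoints) is entirely different, and that difference is where the trouble is.

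The gap is the final ``consistency check then handles any residual conflicts \ldots via local augmenting paths (equivalently, Hall's defect form).'' You acknowledge that repairing a chain endpoint $r\in\bar D$ with the alternate $g_r^1$ or $\M_2(r)$ can itself collide with $\M_1(r')$ for another $r'\in\bar D$, and this can cascade; but you do not prove that the cascade closes off with the items that Phase~$1$ actually makes available. That is precisely what must be shown, and it is the hard part of the lemma. A strong warning sign is that your dichotomy for $r\in\bar D$ (``a second element of $\Goods_r^1$ appears in $\vecx_r^1$, or $\Goods_r^1 \subseteq \bigcup_i \vecx_i^1$'') is already established after only two iterations of Phase~$1$: if $\M_2(r)\notin\Goods_r^1$ the swap argument applied at iteration~$2$ forces all of $\Goods_r^1\cap\Goods^{rem}$ to be matched, so $\Goods_r^1$ is fully absorbed. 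So if your augmenting-path patch went through as asserted, two Phase-$1$ iterations would suffice for the lemma, which is at odds with the design of $\AlgSV$ and with the $\lceil\log n\rceil$ dependence in Theorem~\ref{thm:mainsub}. The $\lceil\log n\rceil$ iterations you invoke are never genuinely used in your argument, whereas in the paper's induction each additional iteration is exactly what halves the number of unresolvable agents. As written, your last paragraph asserts the conclusion rather than arguing for it, and that assertion is where the lemma's content actually lives.
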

\begin{proof}
Among all feasible matchings between the set of agents and the set of items, say $T$, allocated after $t$ iterations of Phase $1$, consider the set of matchings $\M$ where each agent $i$ whose $\Goods_i^1\subseteq T$ is matched to some item in $\Goods_i^1$. Arbitrarily pick a matching from $\M$ where maximum number of agents are matched to an item from their $\Goods_i^1$s. Denote this matching by $\M^t$. Since $|\bigcup_{i\in \Set}\Goods_i^1|\ge |\Set|$ for every set $S$ of agents, in $\M^t$, each agent $i$, who is not matched to an item from their $\Goods_i^1$, has at least one item of $\Goods_i^1$ still unallocated after $t$ iterations.  

Let $\A_t$ denote the set of agents that are not matched to any item from their $\Goods_i^1$ in $\M^t$. We prove by induction on $t$ that $|\A_t|\le n/2^t$. 

For the base case, when $t=1$, we count the number of agents who did not receive any item from their own $\Goods_i^1$ in the maximum weight matching of the algorithm. We know that before the first iteration, every item is unallocated. An agent will not receive any item from $\Goods_i^1$ only if all items from this set are allocated to other agents in the matching. Hence, if $\alpha$ agents did not receive any item from their $\Goods_i^1$, all items from at least $\alpha$ number of $\Goods_i^1$ sets got matched to some agent(s) in the first matching. If $\alpha<n/2$, then more than $n/2$ agents themselves received some item from their $\Goods_i^1$. If $\alpha\ge n/2$, then at least $\alpha$ items, each from a different $\Goods_i^1$ were allocated. In either case, releasing the allocation of the first matching releases at least $n/2$ items, each belonging in a distinct agent's $\Goods_i^1$. Hence, in $\M^1$ at least $n/2$ agents receive an item from their $\Goods_i^1$, and $|\A_1|\le n/2$.

For the inductive step, we assume the claim is true for the first $t$ iterations. That is, for every $k\le t$, in $\M^k$, at most $n/2^k$ agents do not receive an item from their $\Goods_i^1$'s.

Before the $(t+1)^{st}$ iteration begins, we know that for every agent $i$ in $\A_{t}$, at least one item from their $\Goods_i^1$ is still unallocated. Again by the reasoning of the base case, at least half of the agents in $\A_{t}$ will have some item from their $\Goods_i^1$ allocated in the $(t+1)^{st}$ matching (possibly to some other agent). Hence, in $\M^{(t+1)}$, $|\A_{(t+1)}|\le |\A_t|/2$. By the inductive hypothesis, $|\A_{(t+1)}|\le n/2^{(t+1)}.$    \qedhere
 \end{proof}

 \begin{proof}[\textit{Proof of Theorem \ref{thm:mainsub}}]
 From Lemma \ref{lem:main}, 
 \begin{equation*}
     v_i(h_i^1, \dotsc, h_i^{\tau_i^2}) \geq \frac{u_i^*}{n}.
 \end{equation*}

 By Lemma \ref{lem:numwinnerssub}, giving each agent her own $g_i^1$ or some item, denoted by say $h_i^{1*}$, that gives her a marginal utility over $\emptyset$ at least as much as $v_i(g_i^1)$ is a feasible matching before Phase $3$ begins. Therefore, we get,
 \begin{equation}\label{eqn:lower_items}
        {\NSW}(\vecx) \ge \left( \prod\limits_{i = 1}^{n} (v_i(h_i^{1*}, h_i^2, \dotsc, h_i^{\tau_i^2}))^{\eta_i} \right)^{1/(\sum_{i = 1}^n \eta_i)}.
\end{equation}
Since the valuation functions are monotonic,
\begin{align*}
    v_i(h_i^{1*}, h_i^2, \dotsc, h_i^{\tau_i^2})\ \geq\ v_i(h_i^{1*})\ \geq\ v_i(g_i^1)\enspace .
\end{align*}
Phase $1$ of the algorithm runs for $\lceil\log n\rceil $ iterations and each iteration allocates $n$ items. Thus, $| \vecx_i^* \setminus \Bar{\vecx}_i^* | \leq n \lceil\log n\rceil$ and $| \Set_i^1 | \leq n$ implying, $| (\vecx_i^* \setminus \Bar{\vecx}_i^*) \cup \Set_i^1 | \leq n(\log n + 2) $. Thus,
\begin{equation*}
     v_i(g_i^1)\ \geq\ \frac{1}{n(\log n + 2)}v_i((\vecx_i^* \setminus \Bar{\vecx}_i^*) \cup \Set_i^1 )\enspace .
\end{equation*}
Also,
\begin{align*}
    v_i(h_i^{1*}, h_i^1, \dotsc, h_i^{\tau_i^2})\ \geq\ v_i(h_i^1, \dotsc, h_i^{\tau_i^2})\ \geq\ \frac{u_i^*}{n}\ =\ \frac{1}{n} v_i(\Bar{\vecx}_i^* \setminus \Set_i^1)\enspace .
\end{align*}
Thus,
\begin{align*}
    v_i(h_i^{1*}, h_i^1, \dotsc, h_i^{\tau_i^2}) & \geq \frac{1}{2} \left( \frac{1}{n(\log n + 2)}v_i((\vecx_i^* \setminus \Bar{\vecx}_i^*) \cup \Set_i^1) + \frac{1}{n}v_i(\Bar{\vecx}_i^* \setminus \Set_i^1) \right) \\
    &\geq \frac{1}{2} \frac{1}{n(\log n + 2)} v_i( ((\vecx_i^* \setminus \Bar{\vecx}_i^*) \cup \Set_i^1) \cup (\Bar{\vecx}_i^* \setminus \Set_i^1) )\\
    &= \frac{1}{2} \frac{1}{n(\log n + 2)} v_i(\vecx_i^*)\enspace .
\end{align*}
The second inequality follows from the submodularity of valuations. The last bound, together with \eqref{eqn:lower_items} gives,
\begin{align*}
      \NSW(\vecx) &\ \ge\ \left( \prod_{i = 1}^n \left( \frac{1}{2} \frac{1}{n(\log n + 2)}v_i(\vecx_i^*)\right)^{\eta_i} \right)^{1/\sum_{i}\eta_i}\ \ge \ \frac{1}{2} \frac{1}{n(\log n + 2)} \OPT\enspace .\qedhere
\end{align*}
 \end{proof}

\begin{remark} 
We remark that even if Phases $1$ and $2$ perform some kind of repeated matchings, the edge weight definitions make them different. In the proof of Lemma \ref{lem:numwinnerssub}, we require that a maximum weight matching matches agents to items according to agent valuations for the single item. That is, in all iterations of Phase $1$, the edge weights of the graph in the future are the valuation of the agent for the set containing the single item, and not the increase in the agent's valuation upon adding this item to her current allocation. These quantities are different when the valuations are submodular. For lower bounding the valuation from the lower ranked items, we need to consider the marginal increase, as defined in Phase $2$. However, Lemma \ref{lem:numwinnerssub} may not hold true if marginal increase in valuations is considered for the initial iterations, hence Phase $1$ is required. 
\end{remark} 

\section{Hardness of Approximation}\label{sec:hardness}

We complement our results for the submodular case with a $\frac{\mathrm{e}}{(\mathrm{e}-1)}$-factor hardness of approximation. Formally, we prove the following theorem.

\begin{theorem}\label{thm:submod_hardness}
Unless $\classP = \classNP$, there is no polynomial time $\frac{\mathrm{e}}{(\mathrm{e}-1)}$-factor approximation algorithm for the submodular $\NSW$ problem, even when agents are symmetric and have identical valuations. 
\end{theorem}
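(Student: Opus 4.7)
The plan is to show that the same reduction of \cite{KhotLMM08} from $\Maxcolor$ (which is $\classNP$-hard to approximate within a small constant factor) to the submodular $\Alloc$ (social welfare) problem also yields a $\frac{\mathrm{e}}{\mathrm{e}-1}$-hardness for the submodular $\NSW$ problem, once we verify two structural features of the produced instances: (i) all agents use one common submodular valuation $v$, and (ii) in the YES case, there exists an allocation that is \emph{perfectly balanced}, i.e., $v(\vecx_i)=c$ for every agent $i$, whereas in the NO case every allocation has social welfare at most $(1-1/\mathrm{e}+\varepsilon)\cdot nc$. Feature (i) is already what makes the $\Alloc$ hardness of \cite{KhotLMM08} applicable in the symmetric/identical regime, and feature (ii) is the standard ``completeness'' output of the PCP-style construction used there, which produces an allocation derived from a proper $3$-coloring in which every agent receives items of identical aggregate value.

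With these two properties in hand, the analysis is short. In the YES case, balancedness gives
\[
\NSW(\vecx)\;=\;\Big(\prod_{i=1}^n v(\vecx_i)\Big)^{1/n}\;=\;c.
\]
In the NO case, the AM--GM inequality yields, for any allocation $\vecx'$,
\[
\NSW(\vecx')\;=\;\Big(\prod_{i=1}^n v(\vecx'_i)\Big)^{1/n}\;\le\;\frac{1}{n}\sum_{i=1}^n v(\vecx'_i)\;\le\;\Big(1-\frac{1}{\mathrm{e}}+\varepsilon\Big)\,c.
\]
Hence the ratio between the optimal $\NSW$ in YES and NO instances is at least $\tfrac{\mathrm{e}}{\mathrm{e}-1}-O(\varepsilon)$. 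Any polynomial-time algorithm achieving an approximation factor strictly better than $\tfrac{\mathrm{e}}{\mathrm{e}-1}$ would thus distinguish the two cases and solve gap $\Maxcolor$, contradicting $\classP\neq \classNP$. Since the $\Alloc$ reduction of \cite{KhotLMM08} works even with a constant number of agents (one agent per color class, say), the same statement holds in that regime.

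The main obstacle is verifying point (ii): that the specific submodular functions produced by the \cite{KhotLMM08} construction indeed admit a balanced YES-case allocation with common agent value $c$. The natural approach is to inspect the reduction, where each agent corresponds to a color and the items correspond (after the PCP encoding) to vertex-constraint pairs; a proper $3$-coloring induces an allocation in which each agent receives exactly the items associated with her color class, and by the symmetry of the construction across color classes, these three (or, more generally, constantly many) bundles have the same value under $v$. I would therefore spend most of the proof restating just enough of the \cite{KhotLMM08} reduction to make this symmetry explicit and then concluding with the two-line AM--GM argument above. A minor auxiliary point to check is that the YES value $c$ can be taken strictly positive (so that taking logarithms for $\NSW$ is well-defined), which follows directly from the fact that the YES allocation assigns at least one item of positive marginal value to every color class.
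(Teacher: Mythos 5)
Your proposal is correct and follows essentially the same route as the paper: re-use the $\Maxcolor \to \Alloc$ reduction of~\cite{KhotLMM08}, note that in the completeness case all (identical) agents can be given bundles of equal value $V$ so $\NSW=V$, and in the soundness case apply AM--GM to bound $\NSW$ by the average utility, which is at most $(1-1/\mathrm{e})V$. The only difference is presentational: you call out the ``balanced YES allocation'' property and the positivity of $V$ explicitly as facts to be checked, whereas the paper simply asserts them as consequences of the reduction.
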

\begin{proof}
We show this using the hardness of approximation result of the $\Alloc$ problem proved in \cite{KhotLMM08}. We first summarize the relevant parts of \cite{KhotLMM08}. The $\Alloc$ problem is to find an allocation of a set of indivisible items among a set of agents with monotone submodular utilities for the items, such that the sum of the utilities of all agents is maximized. Note that if the valuation functions were additive, the problem is trivial, and an optimal allocation gives every item to the agent who values it the most. To obtain a hardness of approximation result for the submodular case, the $\Maxcolor$ problem is reduced to the $\Alloc$ problem. $\Maxcolor$, the problem of determining what fraction of edges of a graph can be properly colored when $3$ colors are used to colors all vertices of the graph, is known to be $\classNP$-Hard to approximate within some constant factor $c$. The reduction from $\Maxcolor$ generates an instance of $\Alloc$ with symmetric agents having identical submodular valuation functions for the items. The reduction is such that for instances of $\Maxcolor$ with optimal value $1$, the corresponding $\Alloc$ instance has an optimal value of $nV$, where $n$ is the number of agents in the instance, and $V$ is a function of the input parameters of $\Maxcolor$. In this case, every agent receives a set of items of utility $V$. For instances of $\Maxcolor$ with optimal value at most $c$, it is shown that the optimal sum of utilities of the resulting $\Alloc$ instance cannot be higher than $(1-1/\mathrm{e})nV$. 

For proving hardness of submodular $\NSW$ problem, observe that the input of the $\Alloc$ and $\NSW$ problems are the same. So, we consider the instance generated by the reduction as that of an $\NSW$ maximizing problem. From the results of \cite{KhotLMM08}, we can prove the following claims.
\begin{itemize}
    \item If the optimal value of $\Maxcolor$ is $1$, then the $\NSW$ of the reduced instance is $V$. As every agent receives a set of items of value $V$, the $\NSW$ is also $V$.
    \item If the optimal value of $\Maxcolor$ is at most $c$, then the $\NSW$ is at most $(1-1/\mathrm{e})V$. Applying the AM-GM inequality establishes that the $\NSW$ is at most $1/n$ times the sum of utilities, which is proven to be at most $(1-1/\mathrm{e})nV$. 
\end{itemize}
As $\Maxcolor$ cannot be approximated within a factor $c$, thus $\NSW$ of a problem with submodular utilities cannot be approximated within a factor $\frac{\mathrm{e}}{(\mathrm{e}-1)}$. 

As the $\Alloc$ problem now considered as an $\NSW$ problem had symmetric agents and identical submodular valuation functions, the $\NSW$ problem also satisfies these properties.
\end{proof}

\section{Special Cases}\label{sec:special_cases}

\subsection{Submodular $\NSW$ with Constant Number of Agents}
In this section, we describe a constant factor algorithm for a special case of the submodular $\NSW$ problem. Specifically, we prove the following theorem. 
\begin{theorem}\label{thm:submod_constant}
For any constant $\epsilon>0$ and a constant number of agents $n\geq 2$, there is a $(1-1/e-\epsilon)$-factor approximation algorithm for the $\NSW$ problem with monotone submodular valuations, in the value oracle model. Additionally, this is the best possible factor independent of $n$, and any factor better than $(1-(1-1/n)^n+\epsilon)$ would require exponentially many queries, unless $\classP=\classNP$.
\end{theorem}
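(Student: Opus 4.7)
The plan is to cast the problem as simultaneous submodular maximization subject to a partition-matroid constraint, in the style of Chekuri--Vondr\'ak--Zenklusen, and then transfer that guarantee to the $\NSW$ objective through a grid-search over the optimal per-agent valuations. Because $n$ is constant, an $(1+\epsilon)$-accurate guess of the profile $(v_1(\vecx_1^*),\ldots,v_n(\vecx_n^*))$ can be produced by enumerating values on a geometric grid (with granularity $1+\epsilon$ and range $[v_i(j_{\min}),\, v_i(\Goods)]$ for each $i$); this is only polynomial overhead in $m$ for fixed $n$ and preserves the $\NSW$ up to a $(1+\epsilon)$ factor. For each guess $(V_1,\ldots,V_n)$ I would solve the feasibility variant \emph{find an integral partition $(\vecx_1,\ldots,\vecx_n)$ of $\Goods$ with $v_i(\vecx_i)\ge(1-1/e-\epsilon)V_i$ for every $i$}, and combining the resulting per-agent guarantees by the geometric-mean monotonicity delivers the claimed $\NSW$ ratio.

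The feasibility variant is solved in two stages. First I would move to the multilinear relaxation: introduce fractional assignments $y_{ij}\in[0,1]$ with $\sum_i y_{ij}=1$, consider the multilinear extensions $F_i(y_i)=\Ex{v_i(R(y_i))}$ where $R(y_i)$ samples each item $j$ independently with probability $y_{ij}$, and optimize over the partition matroid polytope. Invoking Theorem~\ref{thm:multiple_submod} (the multi-objective continuous greedy for the matroid polytope) I can find in polynomial time a fractional point $y$ with $F_i(y_i)\ge(1-1/e-\epsilon)V_i$ simultaneously for every $i$, provided the guess was feasible at the true optimum. Second, I would round $y$ to an integral allocation $\vecx$ using the dependent randomized (swap) rounding of Chekuri--Vondr\'ak--Zenklusen. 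By Theorem~\ref{thm:tail_bound_submod}, for each fixed $i$ the rounded valuation $v_i(\vecx_i)$ concentrates around $F_i(y_i)$ within a $(1-\epsilon)$-factor except with polynomially small failure probability; since $n$ is constant a union bound guarantees all agents succeed simultaneously with constant positive probability, and a small number of independent repetitions makes the outcome deterministic in expectation. Composing the $(1-1/e-\epsilon)$ loss of the continuous stage with the $(1-\epsilon)$ loss of the rounding and the $(1+\epsilon)$ loss of the grid yields the advertised $(1-1/e-O(\epsilon))$ factor, which one absorbs into $\epsilon$ by rescaling.

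The main obstacle will be the simultaneous handling of the $n$ submodular targets. Plain continuous greedy produces a single $(1-1/e)$-factor bound for one submodular objective, but here the matroid-constrained trajectory has to push up \emph{every} $F_i$ at once; this forces the use of the multi-objective variant, which relies on a careful choice of gradient direction that is feasible for all agents in tandem and on the convexity of the objective along the greedy path. The rounding step is the second delicate piece: since the $v_i$ are submodular rather than linear, off-the-shelf Chernoff bounds do not apply, and one must invoke the submodular concentration for swap rounding provided by Theorem~\ref{thm:tail_bound_submod}. Finally, the matching inapproximability of $(1-(1-1/n)^n+\epsilon)$ is inherited from the known value-oracle and $\classNP$-hardness constructions for submodular welfare maximization with $n$ players (as in Vondr\'ak's paper), reinterpreted through the symmetric-identical-valuations reduction underlying Theorem~\ref{thm:submod_hardness}, which shows that any $\NSW$ approximation at least as good as the welfare bound would violate the corresponding lower bound.
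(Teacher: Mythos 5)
Your proposal matches the paper's proof in both method and key ingredients: reduce to the partition-matroid allocation structure, grid-enumerate the per-agent target values $(V_1,\ldots,V_n)$ (the paper wraps this in a binary search on $\prod_i V_i$, but this is a cosmetic difference), invoke Theorem~\ref{thm:multiple_submod} for the simultaneous multilinear-extension guarantee, round via swap rounding using the concentration bound of Theorem~\ref{thm:tail_bound_submod}, and read off the $(1-(1-1/n)^n)$ lower bound from the same $\Maxcolor$ reduction used in Theorem~\ref{thm:submod_hardness}. This is essentially the paper's argument.
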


The key results that establish this result are from the theory of submodular function maximization developed in \cite{ChekuriVZ10}. The broad approach for approximately maximizing a discrete monotone submodular function is to optimize a popular continuous relaxation of the same, called the multilinear extension, and round the solution using a randomized rounding scheme. We will use an algorithm that approximately maximizes multiple discrete submodular functions, described in \cite{ChekuriVZ10}, as the main subroutine of our algorithm for the submodular $\NSW$ problem, hence first we give an overview of it, starting with a definition of the multilinear extension.

\begin{definition}[Multilinear Extension of a submodular function] Given a discrete submodular function $f:2^m\rightarrow \mathbb{R_+}$, its multilinear extension $F:[0,1]^m\rightarrow \mathbb{R_+}$, at a point $y\in [0,1]^m$, is defined as the expected value of $f(z)$ at a point $z\in \{0,1\}^m$ obtained by rounding $y$ such that each coordinate $y_i$ is rounded to $1$ with probability $y_i$, and to $0$ otherwise. That is, 
$$F(y)=\mathbb{E}[f(z)]=\sum\limits_{X\subseteq [m]}f(X)\prod\limits_{i\in X}y_i\prod\limits_{i\notin X}(1-y_i).$$
\end{definition}

The following theorem proves that the multilinear extensions of multiple discrete submodular functions defined over a matroid polytope can be simultaneously approximated to optimal values within constant factors.
\begin{theorem}\cite{ChekuriVZ10}\label{thm:multiple_submod}
Consider monotone submodular functions $f_1,\dotsc,f_n : 2^N \rightarrow \mathbb{R_+}$, their multilinear extensions $F_i : [0,1]^N \rightarrow \mathbb{R_+}$ and a matroid polytope $P \subseteq [0,1]^N$. There is a polynomial time algorithm which, given $V_1,...,V_n \in \mathbb{R_+}$, either finds a point $x \in P$ such that $F_i(x) \geq  (1-1/e)V_i$ for each $i$,  or returns a certificate that there is no point $x \in P$ such that $F_i(x) \geq V_i$ for all $i$. 
\end{theorem}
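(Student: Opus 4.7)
The plan is to establish this via a multi-objective variant of the continuous greedy algorithm of Vondr\'ak, extended to handle several submodular objectives simultaneously. We maintain a fractional point $y(t)\in P$ evolving in continuous time $t\in[0,1]$, starting from $y(0)=0$, with velocity $\dot y(t)=v(t)\in P$ chosen to push all $F_i$ toward their targets. At each instant $t$ we solve the feasibility LP: find $v\in P$ such that $v\cdot \nabla F_i(y(t))\ge V_i - F_i(y(t))$ for every $i\in[n]$. If this LP is feasible, we pick a witness $v(t)$ and integrate; if the LP ever becomes infeasible at some $t^\ast\in[0,1]$, we halt and output a Farkas-style certificate (nonnegative multipliers $\lambda_i$ together with a separating inequality over $P$) as the required infeasibility certificate.

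Correctness rests on a single submodular-calculus identity: if $x^\ast\in P$ satisfies $F_i(x^\ast)\ge V_i$ for all $i$, then at every time $t$
\[
x^\ast\cdot \nabla F_i(y(t))\;\ge\; F_i(x^\ast\vee y(t)) - F_i(y(t))\;\ge\; F_i(x^\ast)-F_i(y(t))\;\ge\; V_i - F_i(y(t)),
\]
so $v=x^\ast$ always witnesses feasibility of the instantaneous LP; contrapositively, infeasibility of the LP at any $t$ certifies that no such $x^\ast$ exists, which is exactly the certificate demanded by the theorem. Assuming feasibility throughout, the standard continuous greedy differential inequality $\tfrac{d}{dt}F_i(y(t)) = v(t)\cdot\nabla F_i(y(t)) \ge V_i - F_i(y(t))$ integrates via Gr\"onwall-type reasoning to $F_i(y(1))\ge (1-e^{-1})V_i$ for every $i$, and $y(1)=\int_0^1 v(t)\,dt$ lies in $P$ since $P$ is convex and each $v(t)\in P$.

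For the algorithmic implementation I would discretize time into $\Theta(1/\epsilon)$ steps, estimate each $\nabla F_i(y(t))$ by random sampling from the product distribution underlying the multilinear extension (a standard value-oracle primitive), and solve each instantaneous LP in polynomial time using the ellipsoid method with the matroid greedy algorithm as linear-optimization oracle over $P$; the dual of an infeasible LP produces the Farkas certificate. The main obstacle is making the certification step robust to sampling and discretization noise, since a spurious appearance of infeasibility in a noisy LP would deliver a false certificate. The standard remedy is to scale the targets down to $(1-\epsilon)V_i$ before running the continuous process: the resulting slack absorbs the polynomially small estimation error and guarantees, with high probability, that noisy LP-infeasibility implies true infeasibility of the original $V_i$ targets, while the final valuation guarantees remain $(1-1/e)V_i$ up to the negligible loss that is standard in the value-oracle model.
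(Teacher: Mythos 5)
This theorem is imported from \cite{ChekuriVZ10}; the paper itself gives no proof, so there is no internal argument to compare against. Your proposal is essentially a correct reconstruction of the argument underlying the cited result: the multiobjective continuous greedy, where the direction at each time is found by an LP over the matroid polytope, the standard inequality $x^\ast\cdot\nabla F_i(y)\;\ge\;F_i(x^\ast\vee y)-F_i(y)\;\ge\;V_i-F_i(y)$ guarantees the LP remains feasible whenever a point meeting the targets exists (its contrapositive supplying the infeasibility certificate), and integrating $\tfrac{d}{dt}F_i(y(t))\ge V_i-F_i(y(t))$ gives $F_i(y(1))\ge(1-1/e)V_i$ with $y(1)\in P$ by convexity. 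The only soft spot is the treatment of discretization and gradient-sampling error, which you flag but resolve only at the level of ``standard in the value-oracle model''; this costs an arbitrarily small $\epsilon$ in the guarantee, which is harmless here since the paper only ever invokes the theorem to obtain the $(1-1/e-\epsilon)$ factor of Theorem \ref{thm:submod_constant}.
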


Given a discrete monotone submodular function $f$ defined over a matroid, a rounding scheme called the \textit{swap rounding} algorithm can be applied to round a solution of its multilinear extension to a feasible point in the domain of $f$, which is an independent set of the matroid. At a high level, in the rounding scheme, it is first shown that every solution of the multilinear extension can be expressed as a convex combination of independent sets such that for any two sets $S_0$ and $S_1$ in the convex combination, there is at least one element in each set that is not present in the other, that is $\exists e_0\in S_0\backslash S_1$ and $\exists e_1\in S_1\backslash S_0$ . The rounding method then iteratively merges two arbitrarily chosen sets $S_0$ and $S_1$ into one new set as follows. Until both sets are not the same, one set $S_i$ is randomly chosen with probability proportional to the coefficient of its original version in the convex combination $\beta_i$, that is $S_i$ is chosen with probability $\beta_i/(\beta_0+\beta_1)$, and altered by removing $e_i$ from it and adding $e_{1-i}$. The coefficient of the new set obtained by this merge process is the sum of those of the sets merged, i.e., $\beta_0+\beta_1$. 

The following lower tail bound proves that with high probability, the loss in the function value by swap rounding is not too much.
\begin{theorem}\cite{ChekuriVZ10}\label{thm:tail_bound_submod}
 Let $f:\{0,1\}^n \rightarrow\mathbb{R_+}$ be a monotone submodular function with marginal values in $[0,1]$, and $F:[0,1]^n \rightarrow \mathbb{R_+}$ its multilinear extension. Let $(x_1,...,x_n) \in P(M)$ be a point in a matroid polytope and $(X_1,...,X_n) \in \{0,1\}^n$ a random solution obtained from it by randomized swap rounding. Let $\mu_0 = F(x_1,...,x_n)$ and $\delta > 0$. Then 
 $$Pr[f(X_1,...,X_n) \leq (1-\delta)\mu_0] \leq e^{-\mu_0\delta^2/8}.$$
\end{theorem}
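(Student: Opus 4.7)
The plan is to use the standard exponential-moment method applied to a martingale built from the swap-rounding process, where submodularity of $f$ contributes concavity along each swap direction and the marginal-value bound $\leq 1$ gives bounded increments. First I would describe the process as a sequence of elementary exchanges. Starting from the expression $x = \sum_i \beta_i \mathbf{1}_{B_i}$ as a convex combination of bases, each elementary step takes two current bases $S_0, S_1$ with weights $\beta_0, \beta_1$, selects an exchange pair $e_0 \in S_0 \setminus S_1$ and $e_1 \in S_1 \setminus S_0$ guaranteed by the matroid exchange property, and with probability $\beta_0/(\beta_0+\beta_1)$ replaces $e_1$ by $e_0$ inside $S_1$ (otherwise the symmetric swap). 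Let $Y^{(t)}$ denote the fractional point $\sum_i \beta_i^{(t)} \mathbf{1}_{B_i^{(t)}}$ after $t$ elementary steps, so $Y^{(0)} = x$ and $Y^{(T)} = \mathbf{1}_X$ is the indicator of the rounded output.

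The first claim is a vector martingale property: $\mathbb{E}[Y^{(t)} \mid Y^{(t-1)}] = Y^{(t-1)}$. This is a direct calculation on the two coordinates $e_0, e_1$ affected by the swap; all others are unchanged. Combined with the fact that $F$ is concave along any swap direction $\mathbf{e}_{e_0} - \mathbf{e}_{e_1}$ (a standard pipage observation equivalent to submodularity of $f$), Jensen's inequality gives $\mathbb{E}[F(Y^{(t)}) \mid Y^{(t-1)}] \geq F(Y^{(t-1)})$. Hence $M_t := F(Y^{(t)})$ is a submartingale with $M_0 = \mu_0$ and $M_T = f(X)$ almost surely.

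The technical core, and the step I expect to be the main obstacle, is a single-step exponential moment bound of the form
\begin{equation*}
\mathbb{E}\bigl[ e^{-\lambda (M_t - M_{t-1})} \bigm| Y^{(t-1)} \bigr] \;\leq\; 1 - \lambda \,\mathbb{E}[M_t - M_{t-1} \mid Y^{(t-1)}] + \tfrac{\lambda^2}{2}\,\mathbb{E}[(M_t - M_{t-1})^2 \mid Y^{(t-1)}],
\end{equation*}
valid for $\lambda \in [0,1]$, which is then upper-bounded by using the submartingale inequality on the linear term and bounding the second-moment term via the marginal-value hypothesis. The per-step change $M_t - M_{t-1}$ takes only two values $a$ (prob. $p$) and $b$ (prob. $1-p$) with $pa + (1-p)b \geq 0$, and $\max(|a|,|b|) \leq 1$ because the coordinate shift is $\beta_0\beta_1/(\beta_0+\beta_1) \leq 1$ and each coordinate marginal of $F$ is bounded by a marginal of $f$, which lies in $[0,1]$. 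Applying $e^{-z} \leq 1 - z + z^2/2$ for $z \geq -1$ and $1+u \leq e^u$ yields the required single-step estimate.

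Telescoping across the $T$ elementary swap steps, the submartingale drift kills the first-order terms and one is left with $\mathbb{E}[e^{-\lambda f(X)}] \leq e^{-\lambda \mu_0 + \lambda^2 \mu_0 / 2}$, once the total quadratic contribution $\sum_t \mathbb{E}[(M_t - M_{t-1})^2]$ is bounded by $\mu_0$ using a second submodular estimate (each squared increment is at most the absolute increment, and the absolute increments along the trajectory sum to $O(\mu_0)$ by the bounded-marginal hypothesis). Markov's inequality then gives
\begin{equation*}
\Pr\bigl[f(X) \leq (1-\delta)\mu_0\bigr] \;\leq\; e^{\lambda (1-\delta)\mu_0}\cdot \mathbb{E}[e^{-\lambda f(X)}] \;\leq\; \exp\bigl(\lambda^2 \mu_0 / 2 - \lambda \delta \mu_0\bigr),
\end{equation*}
and choosing $\lambda = \delta/2 \in [0,1]$ yields $\Pr[f(X) \leq (1-\delta)\mu_0] \leq e^{-\mu_0 \delta^2 / 8}$, with the constant $8$ arising from this particular choice of $\lambda$ combined with the $\lambda^2/2$ factor in the moment estimate.
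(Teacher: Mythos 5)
The paper itself does not prove this statement; it is imported verbatim from \cite{ChekuriVZ10}, so your proposal has to be judged as a self-contained reproof. Your overall strategy (a martingale over elementary swaps plus an exponential-moment bound) is indeed the style of argument used in \cite{ChekuriVZ10}, but as written there is a genuine gap at exactly the step you flag as the technical core. To get the multiplicative bound $e^{-\mu_0\delta^2/8}$ you must bound the accumulated conditional second moments $\sum_t \mathbb{E}[(M_t-M_{t-1})^2]$ by $O(\mu_0)$, and your justification --- ``each squared increment is at most the absolute increment, and the absolute increments along the trajectory sum to $O(\mu_0)$ by the bounded-marginal hypothesis'' --- does not hold in general. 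The marginal hypothesis only bounds each \emph{single} increment by $1$; it says nothing about the total variation of $F(Y^{(t)})$ along the trajectory, which can rise and fall repeatedly and can be of the order of the number of elementary swaps (roughly the rank times the number of bases in the convex decomposition), which is not controlled by $\mu_0$. With only ``increments bounded by $1$'' in hand, the telescoping argument yields an Azuma--Hoeffding-type bound in terms of the number of steps, not the Chernoff-type bound claimed. Charging the per-step variance (which scales like $\beta_0\beta_1$ times marginal differences) against the value of $F$ itself is precisely the delicate part of the argument in \cite{ChekuriVZ10}, and it is missing from your sketch.

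Two smaller slips sit in the same part of the argument. First, the multilinear extension $F$ is \emph{convex}, not concave, along a swap direction $\mathbf{e}_{e_0}-\mathbf{e}_{e_1}$: its second derivative along that direction equals $-2\,\partial^2 F/\partial x_{e_0}\partial x_{e_1}\ge 0$ by submodularity. Concavity together with Jensen would give $\mathbb{E}[F(Y^{(t)})\mid Y^{(t-1)}]\le F(Y^{(t-1)})$, the opposite of the submartingale property you need, so your stated justification contradicts your (correct) conclusion. Second, the inequality $e^{-z}\le 1-z+z^2/2$ is false for $z\in(-1,0)$ (at $z=-1$ the left side is $e\approx 2.72$ versus $2.5$); the usable inequality is $e^{u}\le 1+u+u^2$ for $|u|\le 1$, which changes constants but not the structure. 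Also, the coordinate shift in an elementary swap is $\beta_0$ or $\beta_1$, not $\beta_0\beta_1/(\beta_0+\beta_1)$ (the latter is the variance scale), although the bound $|M_t-M_{t-1}|\le 1$ survives. None of these smaller points is fatal, but the quadratic-variation claim is, and repairing it essentially requires reproducing the argument of \cite{ChekuriVZ10}.
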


In short, for a matroid $M(X,I)$, given monotone submodular functions $f_i:\{0,1\}^m\rightarrow \mathbb{R_+}, i\in [n]$ over the matroid polytope, and values $v_i, i\in [n]$, there is an efficient algorithm that determines if there is an independent set $S\in I$ such that $f_i(S)\geq (1-1/e)v_i$ for every $i$.

To use this algorithm to solve the submodular $\NSW$ problem, we define a matroid $M(X,I)$ as follows. This construction was first described in \cite{LehmannLN06}, and also used for approximating the submodular welfare in \cite{Vondrak08}. From the sets of agents $\A$ and items $\Goods$, we define the ground set $X=\A\times \Goods$. The independent sets are all feasible integral allocations $I=\{S\subseteq X\mid \forall j: |S\cap \{\A\times \{j\}\}|\leq 1 \}$. The valuation functions of every agent $u_i:\{0,1\}^m\rightarrow \mathbb{R_+}$ translate naturally to submodular functions over this matroid $f_i:I\rightarrow \mathbb{R_+}$, with $f_i(S)=u_i(\Goods_i)$, where $\Goods_i=\{j\in \Goods\mid (i,j)\in S\}$.
With this construction, for any set of values $V_i, i\in[n]$, checking if there is an integral allocation of items that gives valuations at least (approximately) $V_i$ to each agent $i$ is equivalent to checking if there is an independent set in this matroid that has value $V_i$ for every agent $i$. 

\begin{algorithm}[t]
\DontPrintSemicolon
\SetAlgoLined
\SetKwInOut{Input}{Input}
\SetKwInOut{Output}{Output}
\Input{A set $\A$ of $n$ agents with weights $\eta_i,\  \forall i\in\A$, a set $\Goods$ of $m$ indivisible items, and monotone submodular valuations $u_i:2^{\Goods}\rightarrow \mathbb{R_+}$.}  
\Output{An allocation that approximates the $\NSW$.}\medskip
 For any value $Max>0$ that is a power of $(1+\delta)$, scale all valuation functions such that $u_i(\Goods)=Max$ for all $i$. \tcp*{$Max$ is an upper bound on $\NSW$ objective}
 $OPT=Max$ \tcp*{$OPT$ is the optimal $\NSW$ objective}
 define $\beta >0,\ \delta >0$ as small positive constants\;
\While{$OPT \leq Max$}{                    
    flag=0\;
    \For{any set in $V=\{[V_1,V_2,\dotsc,V_n]\mid \prod_i V_i=OPT, \forall i: V_i=(1+\delta)^{k_i}\ for\ some\ k_i\}$ }{
         \If{there is an allocation $\mathbf x$ of $\Goods$ such that $u_i(\vecx_i)\geq (1-1/e)V_i$ for all $i$}{
                     ${\mathbf x}^*=\mathbf x, flag=1$ \tcp*{$flag=1$ if current $OPT$ value is feasible}}
    }
    \eIf{flag=1}{
        $OPT =OPT+(Max+\beta-OPT)/2$ \tcp*{search if a higher value is also feasible. Adding $\beta$ ensures $OPT>Max$ finally, and algorithm converges}}{
        $Max=OPT$, $OPT=OPT/2$ \tcp*{search for a lower feasible value}}
        $OPT=$ nearest power of $(1+\delta)$ greater than $OPT$\;
        $Max=$ nearest power of $(1+\delta)$ greater than $Max$
}
\Return{ ${\mathbf x}^*$ }
\caption{Approximate the Submodular $\NSW$ with constant number of agents}\label{Alg:submod_constant}
\end{algorithm}

The algorithm for approximating the $\NSW$ is now straightforward, and given in Algorithm \ref{Alg:submod_constant}. Essentially, we guess the optimal $\NSW$ value $OPT$, and the utility of every agent in the optimal allocation $V_i$, and check if there is an allocation $X$ that gives every agent $i$ a bundle of value at least (approximately) $V_i$. As every agent can receive at most $Max$ utility, $Max$ is a trivial upper bound for the maximum value of $\NSW$, hence we perform a binary search for the optimal value in the range $(0,Max]$. Searching for sets $V_i$ by enumerating only those sets with values that are powers of $(1+\delta)$ for some constant $\delta>0$ will reduce the time complexity of the algorithm to $O(poly(\log(Max)/\delta))$ instead of $O(poly(Max))$, by changing the approximation factor to $(1-1/e)(1-\delta)\leq (1-1/e-\epsilon)$ for some $\epsilon>0$.  

The hardness claim in Theorem \ref{thm:submod_constant} follows from the proof of Theorem \ref{thm:submod_hardness}. It was shown that in the case where the optimal value of the $\Maxcolor$ instance was $1$, every agent in the reduced $\NSW$ instance received a bundle of items of value $V$, else the total $\NSW$ could not be more than $(1-(1-1/n)^n)V$.

\subsection{Symmetric Additive $\NSW$}
We now prove that $\AlgAdd$ gives an allocation that also satisfies the $\EF$ property, making it not only approximately efficient but also a fair allocation. $\EF$ is formally defined as follows.

\begin{definition}[\cite{Budish11}]
Envy-Free up to one item ($\EF$): An allocation $\vecx$ of $m$ indivisible items among $n$ agents satisfies the envy-free up to one item property, if for any pair of agents $i,\hat{i}$, either $v_i(\vecx_i) \geq v_i(\vecx_{\hat{i}})$, or there exists some item $g\in \vecx_{\hat{i}}$ such that $v_i(\vecx_i) \geq v_i(\vecx_{\hat{i}}\backslash \{g\})$.  
\end{definition}

That is, if an agent $i$ values another agent $\hat{i}$'s allocation more than her own, which is termed commonly by saying agent $i$ \textit{envies} agent $\hat{i}$, then there must be some item in $\hat{i}$'s allocation upon whose removal this envy is eliminated.

\begin{theorem}\label{thm:additive_ef1}
The output of $\AlgAdd$ satisfies the $\EF$ fairness property. 
\end{theorem}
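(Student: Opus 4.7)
The plan is to prove \EF{} by induction on the iteration count of \AlgAdd, maintaining as the invariant that for every pair of agents $(i,\hat{i})$ the current partial allocation $(\vecx_{i,1:t},\vecx_{\hat{i},1:t})$ is \EF. The base case $t=1$ is immediate: every agent holds at most one item after iteration~$1$, so removing that single item from any envied bundle leaves the empty set of value zero. The special iteration-$1$ edge weights $\log(v_i(j)+u_i/n)$ therefore play no role in the base case, since \EF{} is automatic whenever each agent has at most one item.

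For the inductive step I fix a pair $(i,\hat{i})$, write $h=h_i^t$ and $g=h_{\hat{i}}^t$ for the (possibly empty) items that $i$ and $\hat{i}$ receive in iteration $t\ge 2$, and split into three cases. If $i$ does not envy $\hat{i}$ under $\vecx_{\cdot,1:t}$, there is nothing to show. If $v_i(\vecx_{i,1:t})\ge v_i(\vecx_{\hat{i},1:t-1})$, the newly allocated $g$ is itself an \EF{} witness because $v_i(\vecx_{\hat{i},1:t}\setminus\{g\})=v_i(\vecx_{\hat{i},1:t-1})$. Otherwise $i$ already strictly envied $\hat{i}$ before iteration $t$, so the inductive hypothesis supplies a witness $g_0\in\vecx_{\hat{i},1:t-1}$; a short additive check shows that $g_0$ continues to witness \EF{} at iteration $t$ as long as $v_i(h)\ge v_i(g)$.

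The crux is the remaining sub-case $v_i(h)<v_i(g)$, where $g_0$ no longer suffices. Here I invoke max-weight matching optimality at iteration $t\ge 2$: since swapping the matched edges $(i,h)$ and $(\hat{i},g)$ cannot improve the matching, writing $a=v_i(\vecx_{i,1:t-1})$ and $b=v_{\hat{i}}(\vecx_{\hat{i},1:t-1})$, one obtains
\[
(a+v_i(h))(b+v_{\hat{i}}(g))\ \ge\ (a+v_i(g))(b+v_{\hat{i}}(h))\enspace .
\]
Combined with $v_i(h)<v_i(g)$, this forces $v_{\hat{i}}(g)>v_{\hat{i}}(h)$ and yields a quantitative comparison between the two agents' marginal gains from $g$ versus $h$. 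The principal obstacle is converting this multiplicative matching inequality into the additive form required by \EF, since the product inequality controls balance of the updated valuations but does not by itself isolate a single-item witness. I expect to resolve this by chaining the matching inequalities across the earlier iterations in which $\hat{i}$ built up $\vecx_{\hat{i},1:t-1}$: telescoping these local swap conditions should bound $v_i(\vecx_{\hat{i},1:t-1})-v_i(\vecx_{i,1:t-1})$ by the $v_i$-value of a single past item of $\hat{i}$, which then serves as the required \EF{} witness $g'\in\vecx_{\hat{i},1:t}$.
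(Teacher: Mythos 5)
Your proposal has a genuine gap, and the approach is substantially more complicated than needed because you are trying to compare items allocated in the \emph{same} iteration. The paper's proof compares across \emph{different} iterations: since the item $h_{\hat{i}}^k$ that agent $\hat{i}$ receives in a later iteration $k>j$ is still unmatched at iteration $j$, agent $i$ could have been re-routed to it by changing a \emph{single} edge of the matching (no other agent is affected). Optimality of the max-weight matching therefore gives $\eta_i\log\bigl(v_i(\vecx_{i,1:j-1})+v_i(h_i^j)\bigr)\ge \eta_i\log\bigl(v_i(\vecx_{i,1:j-1})+v_i(h_{\hat{i}}^k)\bigr)$, which cancels cleanly to $v_i(h_i^j)\ge v_i(h_{\hat{i}}^k)$ with no accumulated utilities or agent weights surviving. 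Summing these term-by-term with the shift $k=j+1$ immediately yields $v_i(\vecx_i)\ge v_i\bigl(\vecx_{\hat{i}}\setminus\{h_{\hat{i}}^1\}\bigr)$, so the first item of the envied agent is always the \EF{} witness.

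Your crux sub-case ($v_i(h)<v_i(g)$ with $h,g$ matched in the same iteration) is genuinely possible — the $\NSW$ objective favors the poorer agent and can hand the richer agent $i$ the item she values less — and you are right that the two-edge swap inequality
\((a+v_i(h))(b+v_{\hat{i}}(g))\ge(a+v_i(g))(b+v_{\hat{i}}(h))\)
does not by itself produce an additive witness; it bounds a product, and extracting a single-item comparison from it requires using both agents' accumulated values $a,b$, which are exactly what your telescoping hopes to eliminate but never does. The proposal ends with an unexecuted plan (``chaining ... should bound ...''), and I do not see how the chained product inequalities collapse: each one reintroduces the two accumulated totals, so they do not telescope to an item-level bound. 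Also note that the two-edge swap step implicitly assumes symmetric weights ($\eta_i=\eta_{\hat{i}}$); with asymmetric weights the inequality picks up exponents $\eta_i,\eta_{\hat{i}}$ and becomes even harder to use, whereas the one-edge redirect argument never sees the other agent's weight at all. Switch to comparing $h_i^j$ against later-matched items $h_{\hat{i}}^{j+1},h_{\hat{i}}^{j+2},\ldots$ via the single-edge argument, and the induction you were aiming for becomes a one-line summation.
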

\begin{proof}
For every agent $i$ and $j\geq 1$, the item $h_j^i$ allocated to $i$ in the $j^{th}$ iteration of $\AlgAdd$ is valued more by $i$ than all items $h^{i'}_{k}$, $k>j$ allocated to any other agent $i'$ in the future iterations, as otherwise $i$ would have been matched to the other higher valued item in the $j^{th}$ matching. Hence, $\sum\limits_{t=1}^j v_i(h_t^i) \geq \sum\limits_{t=2}^j v_i(h_t^{i'})$. That is, after removing the first item $h^{i'}_1$ from any agent's bundle, the sum of valuations of the remaining items for agent $i$ is not higher than her current total valuation. Thus, after removing the item allocated to any agent in the first matching, agent $i$ does not envy the remaining bundle, making the allocation $\EF$. 
\end{proof}

\begin{remark}
We note that the same proof implies that our algorithm satisfies the strong $\EF$ property, defined in~\cite{ConitzerFSV19}. Intuitively, an allocation satisfies the strong $\EF$ property if upon removing the same item from agent $i$ bundle, no other agent $j$ envies $i$, for all $i$ and $j$. Formally, 
\begin{definition}[Strong $\EF$]
An allocation $\vecx$ satisfies strong $\EF$ if for every agent $i\in \A$, there exists an item $g_i\in \vecx_i$ such that no other agent envies the set $\vecx_i \backslash \{g_i\}$, i.e., $\forall j\in \A,\ v_j(\vecx_j)\geq v_j(\vecx_i\backslash\{g_i\})$.
\end{definition}
\end{remark}

\subsection{Symmetric Restricted Additive $\NSW$}
For the special case when the valuations are restricted, meaning the valuation of any item $v_{ij}$ is either some value $v_j$ or $0$, we now prove $\AlgAdd$ gives a constant factor approximation to the optimal $\NSW$.
\begin{theorem}
$\AlgAdd$ solves the symmetric $\NSW$ problem for restricted additive valuations within a factor $1.45$ of the optimal.
\end{theorem}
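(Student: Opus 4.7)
The plan is to combine the $\EF$ property of $\AlgAdd$'s output (already established in Theorem~\ref{thm:additive_ef1}) with Pareto optimality ($\PO$), and then invoke the result of~\cite{BarmanKV18} stating that any allocation which is simultaneously $\EF$ and (fractionally) Pareto optimal under additive valuations is an $e^{1/e}\approx 1.45$ approximation to the maximum $\NSW$. The entire task therefore reduces to arguing that, for restricted additive valuations, the allocation returned by $\AlgAdd$ is (fractionally) $\PO$.

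To establish $\PO$, I would exploit the greedy-matching structure of $\AlgAdd$. In every iteration the algorithm computes a maximum weight matching with edge weights $w(i,j)=\log(v_i(\vecx_i)+v_i(j))$; taking the product over matched agents, this is precisely a local maximization of $\prod_i(v_i(\vecx_i)+v_i(j_i))$ over single-item additions to agents' current bundles. In the restricted additive case every item $j$ has a single value $v_j$ shared by every agent who desires it (and zero for everyone else), so the edge weights simplify considerably. I would proceed by contradiction: if the final allocation $\vecx$ admitted a Pareto improvement $\vecx'$ with $v_i(\vecx'_i)\ge v_i(\vecx_i)$ for all $i$ and strict for some, the symmetric difference $\vecx\oplus\vecx'$ decomposes into alternating item-exchange paths and cycles in the agent–item bipartite graph. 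The restricted-value structure would then let me extract at least one such cycle that lies within a single iteration of $\AlgAdd$ and strictly increases the weight of that iteration's matching, contradicting the maximum weight property.

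The main obstacle is handling the cross-iteration nature of potential Pareto improvements: an item moved from agent $i$ to agent $i'$ in $\vecx'$ may have been assigned by $\AlgAdd$ in iteration $t$, while the compensating item going to $i$ could come from iteration $t'\neq t$. The key to overcoming this is the restricted structure itself. Because any item has identical value across all agents who desire it, exchanges between agents along such an alternating cycle are weight-preserving in total valuation, so the only way a Pareto improvement can arise is by reassigning an item from an agent who does not desire it (value $0$) to one who does (value $v_j$). But such a reassignment would already have been captured as an improving swap inside the iteration in which that item was matched, contradicting the maximality of the chosen matching at that stage. This argument yields fractional $\PO$, and combining it with $\EF$ and invoking~\cite{BarmanKV18} produces the claimed $e^{1/e}\le 1.45$ approximation factor, matching the current best known bound for this case.
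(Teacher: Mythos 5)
Your high-level plan is exactly the paper's: combine the $\EF$ property (Theorem~\ref{thm:additive_ef1}) with $\PO$ and invoke~\cite{BarmanKV18}. However, the route you take to establish $\PO$ is both more complicated and, as sketched, contains a gap that the paper sidesteps with a much simpler observation.

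The paper's argument is essentially one line: under restricted additive valuations an item $j$ contributes either $v_j$ or $0$ to the utilitarian sum depending only on whether it lands with a desiring agent, so the total welfare of \emph{any} allocation is at most $\sum_j v_j$; if the algorithm's output attains this upper bound (which it does after the trivial modification of never matching an item to an agent who values it at zero while a desiring agent remains), then the allocation is utilitarian-optimal, and utilitarian optimality trivially implies Pareto optimality — integral and fractional alike. There is no need to reason about the maximum-weight-matching structure at all.

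Your argument attempts to derive $\PO$ directly from maximality of the per-iteration matchings via alternating paths and cycles in $\vecx \oplus \vecx'$. The key claim — that any reassignment of a zero-value item to a desiring agent ``would already have been captured as an improving swap inside the iteration in which that item was matched'' — is not correct without further work. A zero-value item $j$ may be matched to a non-desiring agent $i$ in iteration $t$ while all agents desiring $j$ are simultaneously matched to other items they value; the rearrangement that gives $j$ to a desiring agent $i'$ may force the item $j'$ originally matched to $i'$ onto agent $i$, and because the edge weights are $\eta_i \log(v_i(\vecx_i)+v_i(\cdot))$ (nonlinear in the value and dependent on each agent's accumulated bundle), this swap need not increase the matching weight in iteration $t$. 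Moreover the exchange cycle you construct generically crosses iterations, and you explicitly flag but do not resolve this: the restricted-value structure does not by itself localize the cycle to a single iteration. Finally, you never acknowledge that \emph{as stated} the algorithm may match zero-value items to agents; the paper explicitly notes the need for the small modification that forbids this, and without it even the simple utilitarian argument breaks. In short, the destination and the first step ($\EF$ plus~\cite{BarmanKV18}) are right, but the $\PO$ argument should be replaced by the utilitarian-optimality observation rather than a matching-exchange argument.
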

\begin{proof}
We prove that $x^*$, the allocation returned by $\AlgAdd$, is Pareto Optimal ($\PO$). Combined with the statement of Theorem \ref{thm:additive_ef1}, and a result of \cite{BarmanKV18} which proves that any allocation that satisfies both $\EF$ and $\PO$ approximates $\NSW$ with symmetric, additive valuations within a $1.45$ factor, we get the required result.
An allocation of items $x$ is called Pareto Optimal when there is no other allocation $x'$ where every agent gets at least as much utility as in $x$, and at least one agent gets higher utility. In the restricted valuations case, every item adds valuation either $0$ or $v_j$ to some agent's utility. Thus, the sum of valuations of all agents in any allocation is at most $\sum_j v_j$. Observe that $\AlgAdd$ can easily be modified so that it allocates every item to some agent who has non zero valuation for it. Then, the sum of valuations of all agents in the allocation returned by $\AlgAdd$ is $\sum_j v_j$. No other allocation can give an agent strictly higher utility without decreasing another agent's utility. Hence, $x^*$ is a Pareto Optimal allocation.
\end{proof}

\begin{remark}
We remark that Theorem $\ref{thm:additive_ef1}$ also holds for general additive valuations. However, for the general case, the $\PO$ property does not always hold. Consider for example the case where we have two agents $\{\mathsf{A}, \mathsf{B}\}$ and four items $\{g_1, g_2, g_3, g_4 \}.$ Agent $\mathsf{A}$ values the items at $\{2 + \epsilon, 2, \epsilon, \epsilon\}$ and agent $\mathsf{B}$ values them at $\{1, 1, 1, 1\}.$ $\AlgAdd$ allocates items $g_1, g_3$ to agent $\mathsf{A}$ and items $g_2, g_4$ to agent $\mathsf{B}.$ However, we can swap items $g_2$ and $g_3$ to get an allocation that Pareto dominates the allocation output by the algorithm.
\end{remark}

\section{Tightness of the analysis}\label{sec:tightness}
\subsection{Subadditive Valuations}
The matching approach does not extend to agents with subadditive valuation functions. Here the valuation functions satisfy the subadditivity property:
$$v(\Set_1\cup \Set_2)\le v(\Set_1)+v(\Set_2),$$
for any subsets $\Set_1,\Set_2$ of the set of items $\Goods$. 

A counter example that exhibits the shortcomings of the approach is as follows.
Consider an instance with $2$ agents and $m$ items. Assume $m$ is even. Denote the set of items by $\Goods = \{ g_1, g_2, \dotsc, g_m\}.$  Let $\Goods_1 = \{ g_1, g_2, \dotsc, g_{m/2} \}$ and $\Goods_2 = \{ g_{m/2 + 1} , \dotsc, g_m \}$. The valuation function for agent $i \in \{ 1, 2\}$ is as follows.
\begin{equation*}
    v_i(\Set) = v_i(\Set_1 \cup \Set_2) = \max\{ M, | \Set_i | \cdot M \} \quad \forall \Set \subseteq \Goods, \, \Set_1 \subseteq \Goods_1, \, \Set_2 \subseteq \Goods_2.
\end{equation*}
Note that these valuation functions are subadditive, but not submodular. 

The allocation that maximizes the $\NSW$ allocates $\Goods_1$ to agent $1$ and $\Goods_2$ to agent $2.$ The optimal $\NSW$ is $mM/2.$

Now, $\AlgSV$ may proceed in the following way. Since the marginal utility of each item over $\emptyset$ is $M,$ the algorithm can pick any of the items for either of the agents. Suppose the algorithm gives $g_{m/2 + 1}$ to agent $1$ and $g_1$ to agent $2.$ In the next iteration, for agent $1$ ($2$) the marginal utility of any item over $g_{m/2 + 1}$ $(g_1)$ is $0.$  Thus, again the algorithm is at liberty to allocate any item to either of the agents. Now again the algorithm gives exactly opposite allocation as compared to the optimal allocation and gives agent $1$ item $g_{m/2 + 2}$ and gives agent $2$ item $g_2.$ For each iteration this process repeats and ultimately the bundles allocated by algorithm are exactly opposite of the bundles allocated by optimal. The re-matching step first releases $\lceil\log{n}\rceil$ matchings, or $2$ items from both agent allocations, and re-matches them. This may not change the allocations as both agents have already received their best item. The $\NSW$ of the algorithm's allocation is $(M^2)^{1/2}=M,$ giving an approximation ratio of $\Omega(m)$ with the optimal $\NSW$. Even if we increase the number of agents, the factor cannot be made independent of $m$, the number of items. 

The problem in the subadditive case is the myopic nature of each iteration in $\AlgSV$. In each iteration the algorithm only sees one step ahead. At any of the iterations, had the algorithm been allowed to pick and allocate multiple items instead of $1$, it would have been able to select a subset of items from its correct optimal bundle. 

This problem does not arise in the additive case because the valuation of an item here is independent of other items. Submodular valuations allow a minimum marginal utility over an agent's current allocation for items allocated in future iterations, hence this issue does not arise there too.

\subsection{$\XOS$ Valuations}
The following example shows that $\AlgSV$ does not extend to $\XOS$ valuations either. $\XOS$ is a class of valuation functions that falls between subadditive and submodular valuation functions, defined as follows. A set of additive valuation functions, say $\{ \ell_1, \ldots, \ell_k  \}$, is given, and the $\XOS$ valuation of a set of items $\Set$ is the maximum valuation of this set according to any of these additive valuations. i.e., $v(\Set)=\max_{i\in[k]}\{\ell_i(\Set)\}.$ 

To see why the algorithm does not extend to this class of functions, consider the following counter example. We have $n = 2$ agents and $m = 2k$ items, for some $k>3$. Each agent $i \in \{1, 2\}$ has 2 valuation functions $\ell^i_1, \ell^i_2.$ The following two tables pictorially depict these valuations. Each entry $(\ell_h^i,g_j),\ h\in [2], i\in [2], j\in [2k]$ denotes agent $i$'s valuation according to function $\ell_h^i$ for item $g_j$. \\
For agent 1:
\begin{center}
    \begin{tabular}{c|c|c|c|c|c|c|c|c|c|c|c}
         & $g_1$ & $g_2$  & $\ldots$ & $g_k$ & $g_{k+1}$ & $g_{k+2}$ & $g_{k+3}$ & $g_{k+4}$ & \ldots & $g_{2k}$  \\
         \hline
         $\ell_1^1$ & M & M  & $\ldots$ & M  & 0 & 0 & 0 & 0 & $\ldots$ & 0 \\
         $\ell_2^1$ & 0 & 0  & $\ldots$ & 0 & M + $\epsilon$ & M + $\epsilon$ & M + $\epsilon$ & $\epsilon$ & \ldots & $\epsilon$ \\
    \end{tabular}
\end{center}
For agent 2:
\begin{center}
    \begin{tabular}{c|c|c|c|c|c|c|c|c|c|c|c}
         & $g_1$ & $g_2$ & $g_3$ & $g_4$ & $\ldots$ & $g_k$ & $g_{k+1}$ & $g_{k+2}$ & \ldots & $g_{2k}$  \\
         \hline
         $\ell_1^2$ & 0 & 0 & 0 & 0 & $\ldots$ & 0  & M & M & $\ldots$ & M \\
         $\ell_2^2$ & M+$\epsilon$ & M+$\epsilon$ & M+$\epsilon$ & $\epsilon$ & $\ldots$ & $\epsilon$ & 0 & 0 & \ldots & 0 \\
    \end{tabular}
\end{center}
Here $M$ is any large value, and $\epsilon>0$ is negligible.

The allocation optimizing $\NSW$ clearly allocates the first $k$ items to agent $1$ and the next $k$ items to agent $2$, resulting in the $\NSW$ value $Mk.$ $\AlgSV$ on the other hand allocates items $g_1, g_2$ to agent $2$ and items $g_{k+1}, g_{k+2}$ to agent $1$ in Phase $1$. In Phase $2$, it gives $g_3$ to agent $2$ and $g_{k+3}$ to agent $1.$ After this, for all other iterations of Phase $2,$ items $g_{j}, j \leq k$ have zero marginal utility for agent $1$ and items $g_j, j \geq (k+1)$ have zero marginal utility for agent $2$. Thus, $\AlgSV$ allocates items $g_3 \ldots g_{k}$ to agent $2$ and items $g_{k+3}, \ldots, g_{2k}$ to agent $1$ in Phase $2.$ Phase $3$ reallocates items of Phase $1$ $-$ $g_1, g_2, g_{k+1}, g_{k+2}$ allocating $g_{k+1}, g_{k+2}$ to agent $1$ and $g_1$, $g_2$ to agent $2.$ Thus the $\NSW$ of the allocation given by $\AlgSV$ is $(3M + k \cdot \epsilon)$. Hence, the approximation ratio of $\AlgSV$ cannot be better than $(MK)/(3M +k\epsilon)$ or $\Omega(k)=\Omega(m)$ when the valuation functions are $\XOS$.

\subsection{Asymmetric Additive $\NSW$}\label{subsec:AsymmetricNSWExample} We describe an example to prove the analysis of this case is tight. 
Consider an $\NSW$ instance with $n$ agents, referred by $\{1,2\dotsc,n\}$ and $m$ sets of $n^2$ items, referred by $\Goods=\{\Goods_i\mid i\in [m]\},$ where every $\Goods_i=\{g_{i,1}\dotsc,g_{i,n^2}\}\}$. The first agent has weight $W$, while the remaining agents have weight $1$. The valuation function of agent $1$ is as follows.
\begin{equation*}
    \begin{split}
        v_1(g_{i,j})=\begin{cases}
        M & j\in [n], i\in [m]\\
        0 & otherwise.
        \end{cases}
    \end{split}
\end{equation*}
The remaining agents have valuations for items as follows.
\begin{equation*}
    \begin{split}
        \forall k\in [n], k\ne 1:\ v_k(g_{i,j})=\begin{cases}
        M+\epsilon & j\in [n], i\in [m],\epsilon>0\\
        M+\bar{\epsilon} & (k-1)n+1\le j \le kn, i\in [m], \epsilon>\bar{\epsilon}>0\\
        0 & otherwise.
        \end{cases}
    \end{split}
\end{equation*}
It is easy to verify that the optimal allocation that maximizes $\NSW$ gives all items $g_{i,j}$ for $i$ between $(k-1)n+1$ and $kn$ to agent $k$. That is, the $k^{th}$ agent receives the $k^{th}$ set of $n$ items from each of the $m$ sets of $n^2$ items. $\AlgAdd$ on the other hand allocates items as follows. For the first graph, it computes $u_i$ as $M(m-2)n$ for the first agent, and $(M+\epsilon)(m-2)n+(M+\epsilon')(mn)$ for a small $\epsilon'>0$, for the rest. The first max weight matching then allocates to each agent one item from agent $1$'s optimal bundle.  
The following iterations also err as follows. Until the first $n$ items from every set are allocated, irrespective of the agent weights, every agent receives one item from this set if available. In the remaining matchings, agent $1$ does not receive any item, and the other agents get all items in their optimal bundles. The ratio of the $\NSW$ products of the optimal allocation and the algorithm's allocation is as follows.
\begin{equation*}
    \begin{split}
    \frac{\NSW(\vecx)}{\OPT}&\le \left(\frac{(mM)^W(mn\cdot (M+\bar{\epsilon})+m(M+\epsilon))^{n-1}}{(mn\cdot M)^W(mn\cdot (M+\bar{\epsilon}))^{n-1}}\right)^{1/(W+(n-1))}\\
    &\le \left(\frac{(mM)^W(2mn\cdot M)^{n-1}}{(mn\cdot M)^W(mn\cdot M)^{n-1}}\right)^{1/(W+(n-1))}\\
    &\le 2 \left(\frac{1}{n}\right)^{W/(W+(n-1))}. 
    \end{split}
\end{equation*}
With increasing $W$, asymptotically this ratio approaches $2/n$.

\begin{remark}\label{rem:AsymSym}
It is natural to ask if the asymmetric $\NSW$ problem is harder than the symmetric problem. As $\AlgAdd$ is the first non-trivial algorithm for the asymmetric problem, we would like to find if it gives a better approximation factor when applied to a symmetric agents instance. However, after considerable effort, we could not resolve this question definitively. Like the above example, we could not find an example of a symmetric instance for which our analysis was tight. Our conjecture is that $\AlgAdd$ gives a better factor for the symmetric problem, and that the symmetric case itself is easier than the asymmetric $\NSW$ case.
\end{remark}

\section{Conclusions}\label{sec:discussion}
In this work, we have shown two algorithms $\AlgAdd$ and $\AlgSV$. $\AlgAdd$ approximately maximizes the $\NSW$ for the asymmetric additive case within a factor of $O(n)$, while $\AlgSV$ optimizes the asymmetric submodular $\NSW$ within a factor of $O(n\log n)$. We also completely resolve the submodular $\NSW$ problem for the case when there are a constant number of agents, with an $e/(e-1)$ approximation factor algorithm, and a matching hardness of approximation proof. Our algorithms also satisfy other interesting fairness guarantees for smaller special cases, namely, $\EF$ for the additive valuations case, and a better $1.45$ factor approximation for the symmetric agents with restricted additive valuations case. 

Our work has initiated the investigation of the $\NSW$ problem for general cases, and raises several interesting questions. First, we ask if the approximation factor $O(n)$, given by $\AlgAdd$, is the best possible for the asymmetric additive $\NSW$ problem. While $\AlgAdd$ cannot give better than a linear factor guarantee, as proved by an example in Section \ref{subsec:AsymmetricNSWExample}, there could be another algorithm with a sub-linear approximation factor. 

Another problem is the special case where the agent weights are separated by a constant factor. Approaches known for the symmetric $\NSW$ problem fail to extend even to the highly restricted case when agent weights are either $1$ or $2$. $\AlgAdd$ does not seem to give a better guarantee for this case either, and we ask if this question is easier than the general asymmetric agents case. One way to resolve this query is to find an algorithm with an approximation factor equal to some polynomial function of the ratio of the largest to smallest weight. 

A third direction is to resolve the symmetric $\NSW$ problem for  valuation functions that are more general than additive(-like). For instance, the symmetric agents with $\OXS$ valuations (defined in ~\cite{LehmannLN06}) case has not been explored yet.  
\medskip

\noindent
\textbf{Acknowledgments.} We thank Chandra Chekuri and Kent Quanrud for pointing us to relevant literature in submodular function maximization theory, and having several fruitful discussions about the same.

\bibliographystyle{alpha}{}
\bibliography{literature}

\end{document}